\newtheorem{theorem}{Theorem}%[section]
\newtheorem{corollary}[theorem]{Corollary}
\newtheorem{lemma}[theorem]{Lemma}
\newcounter{pln}
\newenvironment{smallenum}
{\begin{list} {(\roman{pln})} { \usecounter{pln}%
     \settowidth{\labelwidth}{(iii)}%
     \setlength{\leftmargin}{\labelwidth}%
     \addtolength{\leftmargin}{1.5 \labelsep}%
     \setlength{\itemsep}{0cm}%
     \setlength{\parsep}{0cm}%
     \setlength{\topsep}{0mm} }
}
{\end{list}}
\newcommand{\RR}{\mathbb{E}}
\newcommand{\NN}{\mathbb{N}}
\newcommand{\cC}{\mathcal{C}}
\newcommand{\fF}{\mathcal{F}}
\newcommand{\lL}{\mathcal{L}}
\newcommand{\pP}{\mathcal{P}}
\newcommand{\qQ}{\mathcal{Q}}
\newcommand{\sS}{\mathcal{S}}
\newcommand{\PPP}{\mathscr{P}}
\renewcommand{\to}{\rightarrow}
\newcommand{\nin}{\not\in}
\newcommand{\sm}{{\setminus}}
\newcommand{\polar}[1]{{#1}^\star}
\newcommand{\polarhp}[1]{{#1}^{\star-}}
\newcommand{\lexp}[1]{\lfloor\frac{#1}{2}\rfloor}
\newcommand{\cexp}[1]{\lceil\frac{#1}{2}\rceil}
\newcommand{\etal}{\textit{et al.}\xspace}
\newcommand{\resp}{resp.,\xspace}
\newcommand{\ie}{i.e.,\xspace}
\newcommand{\sep}{,\xspace}
\begin{document}
% title
\title{Convex hulls of spheres and convex hulls of convex
  polytopes lying on parallel hyperplanes}

% author/paper info
\author{Menelaos I. Karavelas$^{\dagger,\ddagger}$\hfil{}
Eleni Tzanaki$^{\dagger,\ddagger}$\\[5pt]
\it{}$^\dagger$Department of Applied Mathematics,\\
\it{}University of Crete\\
\it{}GR-714 09 Heraklion, Greece\\
{\small\texttt{\{mkaravel,etzanaki\}@tem.uoc.gr}}\\[5pt]
\it{}$^\ddagger$Institute of Applied and Computational Mathematics,\\
\it{}Foundation for Research and Technology - Hellas,\\
\it{}P.O. Box 1385, GR-711 10 Heraklion, Greece}

\maketitle

\begin{abstract}
  Given a set $\Sigma$ of spheres in $\mathbb{E}^d$, with $d\ge{}3$
  and $d$ odd, having a fixed number of $m$ distinct radii
  $\rho_1,\rho_2,\ldots,\rho_m$, we show that the worst-case
  combinatorial complexity of the convex hull $CH_d(\Sigma)$ of
  $\Sigma$ is
  $\Theta(\sum_{1\le{}i\ne{}j\le{}m}n_in_j^{\lfloor\frac{d}{2}\rfloor})$,
  where $n_i$ is the number of spheres in $\Sigma$ with radius $\rho_i$.
  Our bound refines the worst-case upper and lower bounds on the
  worst-case combinatorial complexity of $CH_d(\Sigma)$ for all odd
  $d\ge{}3$.

  To prove the lower bound, we construct a set of $\Theta(n_1+n_2)$
  spheres in $\mathbb{E}^d$, with $d\ge{}3$ odd, where $n_i$ spheres
  have radius $\rho_i$, $i=1,2$, and $\rho_2\ne\rho_1$, such that
  their convex hull has combinatorial complexity
  $\Omega(n_1n_2^{\lfloor\frac{d}{2}\rfloor}+n_2n_1^{\lfloor\frac{d}{2}\rfloor})$.
  Our construction is then generalized to the case where the spheres
  have $m\ge{}3$ distinct radii.

  For the upper bound, we reduce the sphere convex hull
  problem to the problem of computing the worst-case combinatorial
  complexity of the convex hull of a set of $m$ $d$-dimensional
  convex polytopes lying on $m$ parallel hyperplanes in
  $\mathbb{E}^{d+1}$, where $d\ge{}3$ odd, a problem which is of
  independent interest.
  More precisely, we show that the worst-case combinatorial complexity
  of the convex hull of a set
  $\{\mathcal{P}_1,\mathcal{P}_2,\ldots,\mathcal{P}_m\}$
  of $m$ $d$-dimensional convex polytopes lying on $m$ parallel
  hyperplanes of $\mathbb{E}^{d+1}$ is
  $O(\sum_{1\le{}i\ne{}j\le{}m}n_in_j^{\lfloor\frac{d}{2}\rfloor})$,
  where $n_i$ is the number of vertices of $\mathcal{P}_i$.
  This bound is an improvement over the worst-case
  bound on the combinatorial complexity of the convex hull of a point
  set where we impose no restriction on the points'
  configuration; using the lower bound construction for the sphere
  convex hull problem, it is also shown to be tight for all odd
  $d\ge{}3$.

  Finally: (1) we briefly discuss how to compute convex hulls of
  spheres with a fixed number of distinct radii, or convex hulls of a
  fixed number of polytopes lying on parallel hyperplanes; (2) we show
  how our tight bounds for the parallel polytope convex hull problem,
  yield tight bounds on the combinatorial complexity of the Minkowski
  sum of two convex polytopes in $\mathbb{E}^d$; and (3) we state some
  open problems and directions for future work.

  \bigskip\noindent
  \textit{Key\;words:}\/
  high-dimensional geometry\sep discrete geometry\sep
  combinatorial geometry\sep combinatorial complexity\sep
  convex hull\sep Minkowski sum\sep spheres\sep convex
  polytopes\sep parallel hyperplanes
  \smallskip\par\noindent
  \textit{2010 MSC:}\/ 68U05 (Primary), 52B05, 52B11, 52C45 (Secondary)
\end{abstract}

\clearpage

% intro.tex

\section{Introduction and results}
\label{sec:intro}

Let $\Sigma$ be a set of $n$ spheres in $\RR^d$, $d\ge{}2$, where the
dimension $d$ is fixed. We call $\Pi$ a \emph{supporting} 
hyperplane of $\Sigma$ if it has non-empty intersection with $\Sigma$
and $\Sigma$ is contained in one of the two closed halfspaces bounded
by $\Pi$. We call $H$ a \emph{supporting halfspace} of the set
$\Sigma$ if it contains all spheres in $\Sigma$ and is limited by
a supporting hyperplane $\Pi$ of $\Sigma$. The intersection of all
supporting halfspaces of $\Sigma$ is called the convex hull
$CH_d(\Sigma)$ of $\Sigma$. The definition of convex hulls detailed
above is applicable not only to spheres, but also to any finite
set of compact geometric objects in $\RR^d$. In the case of points,
i.e., if we have a set $P$ of $n$ points in $\RR^d$, the worst-case
combinatorial complexity\footnote{In the rest of the paper, and unless
  otherwise stated, we use the term ``complexity'' to
  refer to ``combinatorial complexity''.} of $CH_d(P)$ is known to be
$\Theta(n^{\lexp{d}})$. Moreover, there exist worst-case optimal
algorithms for constructing $CH_d(P)$ that run in
$O(n^{\lexp{d}}+n\log{}n)$ time, e.g., see
\cite{g-eadch-72,ph-chfsp-77,a-aeach-79,p-ortap-79,c-ochaa-93}.
Since the complexity of $CH_d(P)$ may
vary from $O(1)$ to $\Theta(n^{\lexp{d}})$, a lot of work has been
devoted to the design of output-sensitive algorithms for constructing
$CH_d(P)$, i.e., algorithms the running time of which depends on the
size of the output convex hull $CH_d(P)$, e.g., see
\cite{ck-acp-70,j-ichfs-73,s-chdch-86,ks-upcha-86,ms-loq-92,
cm-dosch-95,ar-cvddp-96,c-oosch-96,c-osrch-96,csy-pddpo-97}.
For a nice overview of the various algorithms for computing the convex
hull of points sets, the interested reader may refer to the paper by
Erickson \cite{e-nlbch-99}, while Avis, Bremner and Seidel
\cite{abs-hgach-97} have a very nice discussion about the
effectiveness of output-sensitive convex hull algorithms for point
sets.

Results about the convex hull of non-linear objects are very
limited. Aurenhammer \cite{a-pdpaa-87} showed that the worst-case
complexity of the power diagram of a set of $n$
spheres in $\RR^d$, $d\ge{}2$, is $O(n^{\cexp{d}})$, which also
implies the same upper bound for the worst-case
complexity for the convex hull of the sphere set.
Rappaport \cite{r-chada-92} devised an $O(n\log{}n)$
algorithm for computing the convex hull of a set of discs on the
plane, which is worst-case optimal. Boissonnat \etal
\cite{bcddy-acchs-96} give an $O(n^{\cexp{d}} + n\log{}n)$
algorithm for computing the convex hull of a set of $n$ spheres
in $\RR^d$, $d\ge{}2$, which is worst-case optimal in three and
even dimensions, since they also show that the worst-case
complexity of the convex hull of $n$ spheres in
$\RR^3$ is $\Theta(n^2)$. Finally, their results hold true for the
case of homothetic convex objects. Boissonnat and Karavelas
\cite{bk-ccevc-03} settled a conjecture in \cite{bcddy-acchs-96}: they
proved that the worst-case complexity of a set of $n$
spheres in $\RR^d$, $d\ge{}2$, is $\Theta(n^{\cexp{d}})$, which
also implied that the algorithm presented in \cite{bcddy-acchs-96} is
optimal for all $d$. As far as output-sensitive algorithms are
concerned, Boissonnat, C{\'e}r{\'e}zo and Duquesne \cite{bcq-osacc-92}
showed how to construct the convex hull of a set of $n$
three-dimensional spheres in $O(nf)$ time, where $f$ is the size of
the output convex hull, while Nielsen and Yvinec \cite{ny-oscha-98}
discuss optimal or almost optimal output-sensitive convex hull
algorithms for planar convex objects.

In this paper we consider the problem of computing the
complexity of the convex hull of a set of spheres, when the spheres
have a fixed number of distinct radii. This problem has been posed by
Boissonnat and Karavelas \cite{bk-ccevc-03}, and it is meaningful for
odd dimensions only: in even dimensions the complexity of both the
convex hull of $n$ points and the convex hull of $n$ spheres is
$\Theta(n^{\lexp{d}})=\Theta(n^{\cexp{d}})$, \ie the
two bounds match.

Consider a set of $n$ spheres $\Sigma$ in $\RR^d$, where $d\ge{}3$ and
$d$ odd, such that the spheres in $\Sigma$ have a fixed number $m$ of
distinct radii $\rho_1,\rho_2,\ldots,\rho_m$. Let $n_i$ be the
number of spheres in $\Sigma$ with radius $\rho_i$.
We say that $\rho_\lambda$ \emph{dominates} $\Sigma$ if
$n_\lambda=\Theta(n)$. 
We further say that $\Sigma$ is \emph{uniquely (\resp strongly)
  dominated}, if, for some $\lambda$, $\rho_\lambda$ dominates
$\Sigma$, and $n_i=o(n)$ (\resp $n_i=O(1)$), for all $i\ne\lambda$.
Using this terminology, we can qualitatively express our results as
follows.
Firstly, if $\Sigma$ is strongly dominated, then, from the
combinatorial complexity point of view, $CH_d(\Sigma)$ behaves as if
we had a set of points, or equivalently a set of spheres with the same
radius. If, however, $\Sigma$ is dominated by at least two radii,
$CH_d(\Sigma)$ behaves as in the generic case, where we impose no
restriction on the number of distinct radii in $\Sigma$. Finally, if
$\Sigma$ is uniquely dominated (but not strongly dominated), the
complexity of $CH_d(\Sigma)$ stands in-between the two extremes above:
the complexity of $CH_d(\Sigma)$ is asymptotically larger than the
points' case (or when we have spheres with the same radius), and
asymptotically smaller than the generic case, where we impose no
restriction on the number of distinct radii in $\Sigma$.
From the quantitative point of view, our results refine the results in
\cite{bk-ccevc-03} for any odd dimension $d\ge{}3$ as follows.
We prove that the worst-case complexity of $CH_d(\Sigma)$ is
$\Theta(\sum_{1\le{}i\ne{}j\le{}m}n_in_j^{\lexp{d}})$. 
This tight bound constitutes an improvement over the generic worst-case
complexity of $CH_d(\Sigma)$ if $\Sigma$ is uniquely dominated, since
in this case the complexity of $CH_d(\Sigma)$ is $o(n^{\cexp{d}})$. On
the other hard, it matches the generic worst-case complexity of
$CH_d(\Sigma)$, if $\Sigma$ is dominated by at least two radii: in
this case the worst-case complexity of $CH_d(\Sigma)$ becomes
$\Theta(n^{\cexp{d}})$. Finally, if $\Sigma$ is strongly dominated,
the complexity of $CH_d(\Sigma)$ is $\Theta(n^{\lexp{d}})$, \ie
it matches the worst-case complexity of convex hulls of point sets (or
sets of spheres where all spheres have the same radius).

To establish the lower bound for the complexity of $CH_d(\Sigma)$, we
construct a set $\Sigma$ of $\Theta(n_1+n_2)$ spheres in $\RR^d$, for
any odd $d\ge{}3$, where $n_1$ spheres have radius $\rho_1$ and $n_2$
spheres have radius $\rho_2\ne\rho_1$, such that worst-case complexity
of $CH_d(\Sigma)$ is
$\Omega(n_1n_2^{\lexp{d}}+n_2n_1^{\lexp{d}})$. This construction is
then generalized to sets of spheres having a fixed number of $m\ge{}3$
distinct radii. More precisely, we construct a set $\Sigma$ of
$n=\sum_{i=1}^mn_i$ spheres, where $n_i$ spheres have radius
$\rho_i$, with the $\rho_i$'s being pairwise distinct, such that the
worst-case complexity of $CH_d(\Sigma)$ is
$\Omega(\sum_{1\le{}i\ne{}j\le{}m}n_in_j^{\lexp{d}})$.

To prove our upper bound we use a lifting map, introduced in
\cite{bcddy-acchs-96}, that lifts spheres $\sigma_i=(c_i,r_i)$ in
$\RR^d$ to points $p_i=(c_i,r_i)$ in $\RR^{d+1}$. The convex hull 
$CH_d(\Sigma)$ is then the intersection of the hyperplane $\{x_{d+1}=0\}$ 
with the Minkowski sum of the convex hull $CH_{d+1}(P)$ and the hypercone
$\lambda_0$, where $P$ is the point set $\{p_1,p_2,\ldots,p_n\}$ in
$\RR^{d+1}$, and $\lambda_0$ is the lower half hypercone with
arbitrary apex, vertical axis and angle at the apex equal to
$\frac{\pi}{4}$.
When the spheres in $\Sigma$ have a fixed number $m$ of distinct
radii, the points of $P$ lie on $m$ hyperplanes parallel to
the hyperplane $\{x_{d+1}=0\}$. In this setting, computing the complexity
of $CH_d(\Sigma)$ amounts to computing the complexity of the convex hull
of $m$ convex polytopes lying on $m$ parallel hyperplanes of
$\RR^{d+1}$. This observation gives rise to the second major result
in this paper, which is of independent interest, and gives as
corollary a tight bound on the worst-case complexity of the Minkowski
sum of two convex $d$-polytopes.
Given a set $\PPP=\{\pP_1,\pP_2,\ldots,\pP_m\}$ of $m$ convex
$d$-polytopes in $\RR^{d+1}$, with $d\ge{}3$ and $d$ odd, we
show that the worst-case complexity of the convex hull
$CH_{d+1}(\PPP)$ is $\Theta(\sum_{1\le{}i\ne{}j\le{}m}n_in_j^{\lexp{d}})$,
where $n_i$ is the number of vertices of $\pP_i$.
Our upper bound proof is by induction on the number $m$ of parallel
hyperplanes. The lower bound follows from the lower bound on the
complexity of the convex hull of spheres having $m$ distinct radii.
Our bound constitutes an improvement over the worst-case complexity of
convex hulls of points sets, if a single polytope of $\PPP$ has
$\Theta(n)$ vertices, whereas all other polytopes have $o(n)$
vertices, where $n$ is the total number of vertices of all $m$
polytopes, while it matches the worst-case complexity of convex hulls
of points sets if at least two polytopes have $\Theta(n)$ vertices.

The rest of our paper is structured as follows:
In Section \ref{sec:chp} we detail our inductive proof of the upper
bound on the worst-case complexity of the convex hull of
convex polytopes lying on parallel hyperplanes, while in Section
\ref{sec:chp-algo} we discuss how to compute this convex hull.
In Section \ref{sec:chs} we prove our upper bound on the worst-case
complexity of the convex hull of a set of spheres.
Next we present our lower bound construction for any odd $d\ge{}3$
in two steps: first for sphere sets with two distinct radii
and then for sphere sets with $m\ge{}3$ distinct radii. We end the
section by discussing how this lower bound yields a tight lower bound
for the problem of the previous section.
In Section \ref{sec:chs-algo} we explain how to modify the algorithm
by Boissonnat \etal \cite{bcddy-acchs-96} so as to almost optimally
compute the convex hull of a set of spheres with a fixed number of
distinct radii.
Finally, in Section \ref{sec:concl} we summarize our results, 
we explain how our results yield tight bounds for the complexity of
the Minkowski sum of two convex polytopes, and state some open
problems.

% chp.tex

\setlength\abovedisplayskip{7pt plus 3pt minus 7pt}
\setlength\belowdisplayskip{7pt plus 3pt minus 7pt}

\section{Convex hulls of convex polytopes lying on parallel
  hyperplanes}
\label{sec:chp}

A \emph{convex polytope}, or simply \emph{polytope}, $\pP$ in
$\RR^d$ is the convex hull of a finite set of points $P$ in
$\RR^d$.
A polytope $\pP$ can equivalently be described as the intersection of
all the closed halfspaces containing $P$. 
A \emph{face} of $\pP$ is an intersection of $\pP$ with hyperplanes
for which the polytope is contained in one of the two closed
halfspaces determined by the hyperplane.
The dimension of a face of $\pP$ is the dimension of its affine hull.
A $k$-face of $\pP$ is a $k$-dimensional face of $\pP$.
We consider the polytope itself as a trivial $d$-dimensional face; all
the other faces are called \emph{proper} faces. We will use the term
\emph{$d$-polytope} to refer to a polytope the trivial face of which
is $d$-dimensional.
For a $d$-polytope $\pP$, the $0$-faces of $\pP$ are its
\emph{vertices}, the $1$-faces of $\pP$ are its \emph{edges}, the
$(d-2)$-faces of $\pP$ are called \emph{ridges}, while the
$(d-1)$-faces are called \emph{facets}.
For $0\le{}k{}\le{}d$ we denote by $f_k(\pP)$ the number of $k$-faces
of $\pP$.
Note that every $k$-face $F$ of $\pP$ is also a $k$-polytope whose
faces are all the faces of $\pP$ contained in $F$.
A $k$-simplex in $\RR^d$, $k\le{}d$, is the convex hull of any
$k+1$ affinely independent points in $\RR^d$.
A polytope is called \emph{simplicial} if all its proper faces are
simplices. Equivalently, $\pP$ is simplicial if for every vertex $v$
of $\pP$ and every face $F\in\pP$, $v$ does not belong to the affine
hull of the vertices in $F\sm\{v\}$.

A \emph{polytopal complex} $\cC$ is a finite collection of polytopes
in $\RR^d$ such that
(i) $\emptyset\in\cC$,
(ii) if $\pP\in\cC$ then all the faces of $\pP$ are also in $\cC$ and
(iii) the intersection $\pP\cap\qQ$ for two polytopes in $\cC$ is a
face of both $\pP$ and $\qQ$. 
The dimension $\dim(\cC)$ of $\cC$ is the largest dimension of a
polytope in $\cC$.
A polytopal complex is called \emph{pure} if all its maximal (with
respect to inclusion) faces have the same dimension. 
In this case the maximal faces are called the \emph{facets} of
$\cC$. We will use the term \emph{$d$-complex} to refer to a pure
polytopal complex whose facets are $d$-dimensional.
A polytopal complex is simplicial if all its faces are simplices. 
Finally, a polytopal complex $\cC'$ is called a subcomplex
of a polytopal complex $\cC$ if all faces of $\cC'$ are also faces
of $\cC$.

One important class of polytopal complexes arise from polytopes.
More precisely, a $d$-polytope $\pP$, together with all its
faces and the empty set, form a polytopal $d$-complex, denoted
by $\cC(\pP)$. The only maximal face of $\cC(\pP)$, which is clearly
the only facet of $\cC(\pP)$, is the polytope $\pP$ itself.
Moreover, all proper faces of $\pP$ form a pure polytopal complex,
called the \emph{boundary complex} $\cC(\partial\pP)$. The facets of
$\cC(\partial\pP)$ are just the facets of $\pP$, and its dimension is
$\dim(\pP)-1=d-1$.
Given a polytope $\pP$ and a vertex $v$ of $\pP$, the \emph{star}
of $v$ is the polytopal complex of all faces of $\pP$ that contain
$v$, and their faces. The \emph{link} of $v$ is the subcomplex of the
star of $v$ consisting of all the faces of the star of $v$ that do not
contain $v$.

The $f$-vector $(f_{-1}(\pP),f_0(\pP),\ldots,f_{d-1}(\pP))$
of a simplicial $d$-polytope $\pP$ is defined as the $(d+1)$-dimensional
vector consisting of the number $f_k(\pP)$ of $k$-faces of $\pP$,
$-1\le{}k\le{}d$, where $f_{-1}(\pP)=1$ refers to the empty set.
The $h$-vector $(h_0(\pP),h_1(\pP),\ldots,h_d(\pP))$
of a simplicial $d$-polytope $\pP$ is defined as the
$(d+1)$-dimensional vector, where
$h_k(\pP):=\sum_{i=0}^{k}(-1)^{k-i}\binom{d-i}{d-k}f_{i-1}(\pP)$,
$0\le{}k\le{}d$.
The number $h_k(\pP)$ counts the number of facets of $\pP$ in a
\emph{shelling} of $\pP$, whose \emph{restriction} has size $k$; this
number is independent of the particular shelling chosen
(cf. \cite[Theorem 8.19]{z-lp-95}). It is easy to verify from the
defining equations of the $h_k(\pP)$'s that the elements of the
$f$-vector determine the elements of the $h$-vector and
vice versa. Moreover, the elements of the $f$-vector (or,
equivalently, the $h$-vector) are not linearly independent; they
satisfy the so called \emph{Dehn-Sommerville equations}, which can be
written in a very concise form as:
$h_k(\pP)=h_{d-k}(\pP)$, $0\le{}k\le{}d$.
An important implication of the existence of the Dehn-Sommerville
equations is that if we know the face numbers $f_k(\pP)$ for all
$0\le{}k\le\lexp{d}-1$, we can determine the remaining face
numbers $f_k(\pP)$ for all $\lexp{d}\le{}k\le{}d-1$.

%\bigskip
In what follows we recall some facts from \cite[Section 5.2]{g-cp-03} that
will be of use to us later. Let $\pP$ be a $d$-polytope in $\RR^d$,
$F$ a facet of $\pP$, and $H$ the supporting hyperplane of $F$ (with
respect to $\pP$). For an arbitrary point $p$ in $\RR^d$, we say that
$p$ is \emph{beyond} (\resp \emph{beneath}) the facet $F$ of $\pP$, if
$p$ lies in the open halfspace of $H$ that does not contain $\pP$
(\resp contains the interior of $\pP$).
Furthermore, we say that an arbitrary point $v'$ is \emph{beyond} the
vertex $v$ of $\pP$ if for every facet $F$ of $\pP$ containing $v$,
$v'$ is beyond $F$, while for every facet $F$ of $\pP$ not containing
$v$, $v'$ is beneath $F$.
The vertices of the polytope $\pP'=CH_d((P\sm\{v\})\cup\{v'\})$ 
are the same with those of $\pP$, except for $v$ which has been
replaced by $v'$. In this case we say that $\pP'$ is obtained from
$\pP$ by \emph{pulling} $v$ to $v'$.
The vertex $v'$ of $\pP'$ does not belong to the affine hull of the
vertices in $F'\sm\{v'\}$ for every face $F'$ of $\pP'$.
The following result is well-known.

\begin{theorem}[\cite{k-nvcp-64,m-pkccp-70}]
  \label{McMullen}
  Let $\pP$ be a $d$-polytope. 
  \begin{smallenum}
  \item The $d$-polytope $\pP'$ we obtain by pulling a vertex of $\pP$
    has the same number of vertices with $\pP$, and
    $f_k(\pP)\le{}f_k(\pP')$ for all $1\le{}k\le{}d-1$.  
  \item The $d$-polytope $\pP'$ we obtain by successively pulling each
    of the vertices of $\pP$ is simplicial, has the same number of
    vertices with $\pP$, and $f_k(\pP)\le{}f_k(\pP')$ for all
    $1\le{}k\le{}d-1$.  
  \end{smallenum}
\end{theorem}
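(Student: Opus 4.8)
The plan is to read off the change in the face lattice from the classical beneath--beyond theorem and then to recognise the resulting ``horizon'' complex as a subdivision of a vertex figure. Write $P=\mathrm{vert}(\pP)$, let $\sS$ be the set of facets of $\pP$ containing $v$, and recall that the intersection of the facets of $\pP$ through a vertex $w$ is $\{w\}$. By hypothesis $v'$ is beyond a facet $F$ of $\pP$ exactly when $v\in F$, so $\sS$ is precisely the set of facets of $\pP$ visible from $v'$. Applying beneath--beyond to $\conv(\pP\cup\{v'\})$: the vertex $v$ disappears, since it lies on no facet outside $\sS$, so $\conv(\pP\cup\{v'\})=\conv((P\setminus\{v\})\cup\{v'\})=\pP'$; every $w\in P\setminus\{v\}$ survives as a vertex (it lies on some facet $F\notin\sS$, whose supporting hyperplane still supports $\pP'$ because $v'$ is beneath $F$); and $v'$ becomes a vertex. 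Hence $f_0(\pP')=f_0(\pP)$, and the same theorem lists all proper faces of $\pP'$: (a) the faces $F$ of $\pP$ with $v\notin F$, and (b) the faces $\conv(G\cup\{v'\})$, of dimension $\dim G+1$, where $G$ ranges over the faces of $\pP$ lying on a facet in $\sS$ and on a facet not in $\sS$ --- which is exactly the link $\mathrm{lk}(v)$ of $v$, as any face not containing $v$ automatically lies on a facet not containing $v$. Subtracting the face list of $\pP$ from that of $\pP'$ gives, for $1\le k\le d-1$,
\[
f_k(\pP')-f_k(\pP)=f_{k-1}\big(\mathrm{lk}(v)\big)-f_{k-1}(\pP/v),
\]
since the $k$-faces of $\pP$ containing $v$ are counted by $f_{k-1}(\pP/v)$, $\pP/v$ being the vertex figure. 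So part (i) reduces to the inequality $f_j(\mathrm{lk}(v))\ge f_j(\partial(\pP/v))$ for $0\le j\le d-2$.

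This inequality is the heart of the matter, and I would prove it by realising $\mathrm{lk}(v)$ as a polytopal subdivision of $\partial(\pP/v)$. Realise $\pP/v=\pP\cap H$ for a hyperplane $H$ strictly separating $v$ from $P\setminus\{v\}$; its facets are the sets $F_i\cap H$, $F_i\in\sS$. Each $F_i$ is convex and contains $v$, so every ray from $v$ in its tangent cone meets the union of the facets of $F_i$ avoiding $v$ exactly once; hence the central projection $\pi$ from $v$ maps that union bijectively onto $F_i\cap H$, carrying its face structure to a polytopal subdivision of this facet of $\pP/v$. On a face $G$ of $\pP$ with $v\notin G$ the map $\pi$ is injective because $v\notin\mathrm{aff}(G)$ (as $G$ lies on a facet of $\pP$ avoiding $v$, whose supporting hyperplane misses $v$). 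These per-facet subdivisions agree along shared faces --- a shared face comes from a face $F_i\cap F_j$ of $\pP$, on which the induced subdivision is the same whether read off from $F_i$ or from $F_j$ --- so their union is a polytopal subdivision of $\bigcup_i(F_i\cap H)=\partial(\pP/v)$ isomorphic to $\mathrm{lk}(v)$. The proof then closes with the elementary fact that if $\cC'$ is a polytopal subdivision of a polytopal complex $\cC$ then $f_j(\cC')\ge f_j(\cC)$ for all $j$: send each $j$-face $\sigma$ of $\cC$ to a $j$-face of $\cC'$ contained in $\overline\sigma$ (one exists, since cells of $\cC'$ cover $\mathrm{relint}\,\sigma$), and note this is injective because a $j$-cell of $\cC'$ cannot lie inside $\overline{\sigma_1}\cap\overline{\sigma_2}$ for two distinct $j$-faces $\sigma_1,\sigma_2$ of $\cC$.

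For part (ii), pull the vertices one at a time, each time choosing the new vertex generically among the points beyond the vertex being pulled. By part (i) the number of vertices is preserved and no face number decreases at any step, hence the same holds for the terminal polytope $\pP'$. To see $\pP'$ is simplicial: each newly pulled vertex $v'_i$ satisfies $v'_i\notin\mathrm{aff}(\mathrm{vert}(F)\setminus\{v'_i\})$ for every face $F$ of the polytope it was pulled into (the property recalled just before the theorem), and, using the face description above together with the genericity of each subsequent pulling point, this property is preserved by all later pulls --- for a fixed face and a fixed earlier pulled vertex the bad positions of the current pulling point form a lower-dimensional set, and there are finitely many such constraints. Since in $\pP'$ every vertex has been pulled, this is exactly the statement that $\pP'$ is simplicial.

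The step I expect to be the real obstacle is the one in the middle paragraph: checking carefully that the (possibly non-simplicial) link of $v$ sits as a polytopal subdivision of $\partial(\pP/v)$ under central projection --- the facet-by-facet argument, injectivity on faces, and consistent gluing --- together with the monotonicity of face numbers under subdivision. The vertex count, the face bookkeeping and part (ii) are essentially routine consequences of the beneath--beyond theorem.
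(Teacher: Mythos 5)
The paper never proves this statement---it is quoted as classical, with citations to Klee and McMullen and with the beneath--beyond facts recalled from Gr\"unbaum---so your argument is necessarily a self-contained alternative rather than a parallel of the paper's proof. Your part (i) is correct. One point you pass over silently but which is worth recording: Gr\"unbaum's beneath--beyond theorem also lists faces $\conv(G\cup\{v'\})$ with $v'\in\mathrm{aff}(G)$, but since a pulled point is \emph{strictly} beyond or beneath every facet it lies on no facet hyperplane, hence in the affine hull of no proper face, so this case never occurs and your face list (and hence the identity $f_k(\pP')-f_k(\pP)=f_{k-1}(\mathrm{lk}(v))-f_{k-1}(\pP/v)$) is exact. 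The central-projection argument showing that $\mathrm{lk}(v)$ is a polytopal subdivision of $\partial(\pP/v)$, and the monotonicity of face numbers under subdivision, both go through as you sketch them; note, though, that a much lighter argument suffices for part (i): map each $k$-face $F$ of $\pP$ containing $v$ to any $(k-1)$-face of $F$ avoiding $v$ (one exists, and it is a horizon face, so it yields a $k$-face $\conv(G\cup\{v'\})$ of $\pP'$); the assignment is injective because two distinct $k$-faces through $v$ intersect in a face through $v$ of dimension at most $k-1$.

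Part (ii), as written, has a genuine gap: you pull each vertex to a \emph{generically} chosen point, and your preservation step (``the bad positions form a lower-dimensional set'') really uses that freedom. The theorem, however, asserts that \emph{every} successive pulling of all vertices produces a simplicial polytope, with no genericity hypothesis; what you prove is only that some choice of pulls does. The genericity is in fact unnecessary. Argue by induction on the number of pulled vertices that every proper face of the current polytope all of whose vertices are already-pulled points is a simplex. When $v$ is pulled to $v'$, the proper faces of the new polytope are the old faces avoiding $v$ (for which the invariant is inherited) and pyramids $\conv(G\cup\{v'\})$ over horizon faces $G$ of the old polytope; if all vertices of such a pyramid are pulled points, then $G$ is a simplex by the induction hypothesis, and $v'\notin\mathrm{aff}(G)$ because $G$ lies on a facet beneath which $v'$ lies, so the pyramid is a simplex. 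After the last pull every proper face has all its vertices pulled, hence $\pP'$ is simplicial for an arbitrary sequence of pulls, which is what the statement (and its classical sources) claims.
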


In the rest of the paper, when we refer to parallel hyperplanes we
assume that they have the same unit normal vector, \ie they have the
same orientation. Moreover, if two hyperplanes $\Pi$ and $\Pi'$ are
parallel, we say that $\Pi'$ is \emph{above} $\Pi$ if $\Pi'$ lies in
the positive open halfspace delimited by $\Pi$.

Let $\PPP=\{\pP_1,\pP_2,\ldots,\pP_m\}$ be a set of $m$
$d$-polytopes lying on $m$ parallel hyperplanes
$\Pi_1,\Pi_2,\ldots,$ $\Pi_m$ of $\RR^{d+1}$, respectively. Throughout
this section we assume that $m\ge{}2$ is fixed, and that $\Pi_j$ is
above $\Pi_i$ for all $j>i$. 
We denote by $P_i$ the set of vertices of $\pP_i$, by $n_i$ the
cardinality of $P_i$, and by $P$ the union
$P=P_1\cup{}P_2\cup\ldots\cup{}P_m$. Let $\pP=CH_{d+1}(P)$;
note that, for each $i$, not all vertices in $P_i$ are necessarily
vertices of $\pP$. Furthermore, among the polytopes in $\PPP$,
only $\pP_1$ and $\pP_m$ are faces of $\pP$.

The theorem that follows is the adaptation of Theorem
\ref{McMullen} in the context of $m$ parallel polytopes.
Again, we want to perturb
the points in $P$ so that $\pP'$ (the polytope we obtain after
perturbing the points in $P$) is simplicial with
$f_k(\pP)\le{}f_k(\pP')$ for $k\ge{}1$, but we want to retain the
property that the points lying on a hyperplane $\Pi_i$, if perturbed,
are replaced by points that lie on the same hyperplane.
This is almost possible. More precisely, all the faces of the polytope
$\pP'$ are simplicial, with the possible exception of
$\pP_1'=\pP'\cap\Pi_1$ and $\pP_m'=\pP'\cap\Pi_m$.

\begin{lemma}\label{simplicial}
  Let $\PPP=\{\pP_1,\pP_2,\ldots,\pP_m\}$ be a set of $m\ge{}2$
  $d$-polytopes lying on $m$ parallel hyperplanes
  $\Pi_1,\Pi_2,\ldots,\Pi_m$ of $\RR^{d+1}$, respectively, where
  $\Pi_j$ is above $\Pi_i$ for all $j>i$.
  Let $P_i$ be the vertex set of $\pP_i$, $1\le{}i\le{}m$, 
  $P=P_1\cup{}P_2\ldots\cup{}P_m$, and $\pP=CH_{d+1}(P)$.
  The points in $P$ can be perturbed in such a way that:
  \begin{smallenum}
  \item the points of $P$ in each hyperplane $\Pi_i$ remain in
    $\Pi_i$, $1\le{}i\le{}m$,
  \item all the faces of $\pP'$, except possibly the facets
    $\pP_1'$ and $\pP_m'$, are simplices, and,
  \item $f_k(\pP)\le{}f_k(\pP')$ for  all $1\le{}k\le{}d$, 
  \end{smallenum}
  where $\pP'$ is the polytope we obtain after having perturbed the
  vertices of $\pP$ in $P$.
\end{lemma}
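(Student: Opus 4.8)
The plan is to perturb the vertices of $\pP$ one at a time, each time moving a vertex to a nearby position \emph{on the hyperplane it already lies on}, never decreasing any $f_k(\pP)$, and after finitely many such moves reaching the asserted simpliciality; the points of $P$ that are not vertices of $\pP$ may be ignored (nudge them slightly inward first, keeping all later moves small). Writing $e$ for the common normal of the $\Pi_j$'s, the key dichotomy is that $\Pi_1$ and $\Pi_m$ support the facets $\pP_1,\pP_m$ of $\pP$ — a vertex of $\pP$ minimizes $\langle\,\cdot\,,e\rangle$ over $\pP$ exactly on $\pP\cap\Pi_1$ and maximizes it exactly on $\pP\cap\Pi_m$ — whereas every inner hyperplane $\Pi_i$, $1<i<m$, supports no facet of $\pP$ at all, since $\pP$ has vertices both strictly below $\Pi_i$ (those on $\Pi_1$) and strictly above it (those on $\Pi_m$); this is mirrored in the caveat about $\pP_1'$ and $\pP_m'$.

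I would first dispose of the inner hyperplanes. Fix a vertex $v$ of the current polytope on some $\Pi_i$ with $1<i<m$. Since $v$ lies neither on $\pP\cap\Pi_1$ nor on $\pP\cap\Pi_m$, neither $e$ nor $-e$ lies in the normal cone $N_{\pP}(v)$ of $\pP$ at $v$; a short convexity argument then shows that the set of points \emph{beyond} $v$ (with respect to the current polytope, in the sense recalled before Theorem~\ref{McMullen}) meets $\Pi_i$ in a nonempty, relatively open, $d$-dimensional set. Picking $v'$ there generically is a legitimate pulling of $v$ to $v'$ in the sense of Theorem~\ref{McMullen}, merely confined to $\Pi_i$; hence $f_k(\pP)\le f_k(\pP')$ for all $k$, $v'$ replaces $v$, and $v'$ lies on $\operatorname{aff}(F'\sm\{v'\})$ for no face $F'$ of $\pP'$. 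Doing this for every vertex of $\pP$ on an inner hyperplane (each new position chosen generically so as to preserve the affine-hull genericity already gained) puts all such vertices in general position.

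Next the outer hyperplanes; I treat $\Pi_1$, the case of $\Pi_m$ being symmetric. Now $-e\in N_{\pP}(v)$ for every vertex $v$ of $\pP_1$, so no point of $\Pi_1$ is beyond $v$ with respect to $\pP$, and the previous move is unavailable. Instead I would pull $v$, within $\Pi_1$, to a point $v'$ that is beyond $v$ with respect to the $d$-polytope $\pP_1$ and sufficiently close to $v$; then $\pP_1':=\pP'\cap\Pi_1$ is exactly the polytope obtained from $\pP_1$ by pulling $v$, so carrying this out for all vertices of $\pP_1$ — with the positions chosen jointly generic, also relative to the pulled vertices of $\pP_m'$ — makes $\pP_1'$ simplicial by Theorem~\ref{McMullen}(ii). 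The point requiring verification, which I expect to be the main obstacle, is that this constrained pull does not decrease any $f_k(\pP)$, even though it is not literally a pull of $\pP$. The usable fact is that $v'$, being beyond every facet of $\pP_1$ through $v$, is automatically beyond every \emph{other} facet of $\pP$ through $v$ as well: the orthogonal projection onto $\Pi_1$ of the outer normal of any such facet is a nonzero vector of $N_{\pP_1}(v)$, hence has positive inner product with $v'-v$. So $v'$ behaves like a point beyond $v$ with respect to $\pP$, with the single degeneracy that for the facet $\pP_1$ it lies on $\operatorname{aff}(\pP_1)=\Pi_1$ rather than strictly beyond it; carrying out the beneath-and-beyond bookkeeping with this one degeneracy — the facet $\pP_1$ is not destroyed and re-capped but merely replaced in place by its pull $\pP_1'$, while the remaining facets of $\pP$ through $v$ are re-capped from $v'$ exactly as in an ordinary pull — yields a dimension-preserving injection of the faces of $\pP$ into those of $\pP'$, so $f_k(\pP)\le f_k(\pP')$ for $1\le k\le d$.

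Finally, after all these moves I would check that every face $F'$ of $\pP'$ other than $\pP_1'$ and $\pP_m'$ is a simplex. Such an $F'$ is a facet not contained in any single $\Pi_i$, hence meets at least two of them; the vertices of $F'$ on $\Pi_1$ are precisely those of the face $F'\cap\Pi_1$ of $\pP_1'$, those on $\Pi_m$ are those of the face $F'\cap\Pi_m$ of $\pP_m'$, and, by the joint genericity built into the pulls on $\Pi_1$ and $\Pi_m$, these two groups are affinely independent together; every remaining vertex of $F'$ lies on an inner hyperplane and so, by the genericity from the first step, is not in the affine hull of the other vertices of $F'$. Peeling off the inner-hyperplane vertices of $F'$ one by one and using that a finite point set is affinely independent iff no member lies in the affine hull of the rest, we obtain that the vertex set of $F'$ is affinely independent, i.e.\ $F'$ is a simplex. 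Since each elementary move keeps the points of $P$ on their hyperplanes and never decreases a face number of $\pP$, the three conclusions hold for the final $\pP'$, which proves the lemma.
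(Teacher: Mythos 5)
Your proposal is correct and follows essentially the same route as the paper: ordinary pulls (Theorem \ref{McMullen}) confined to the inner hyperplanes $\Pi_i$, $1<i<m$, which are possible precisely because these hyperplanes do not support $\pP$, together with constrained pulls inside $\Pi_1$ and $\Pi_m$ for which one checks, via a beneath--beyond analysis with the single degeneracy at the facet $\pP_1$ (\resp $\pP_m$), that no face number decreases, and finally generic choices to force simpliciality of all faces outside $\Pi_1$ and $\Pi_m$. The only differences are cosmetic: you treat the inner hyperplanes first and deduce ``beyond all other facets of $\pP$ through $v$'' from ``beyond $v$ with respect to $\pP_1$'' by projecting facet normals (the paper makes the reverse deduction and cites Gr\"unbaum's Theorems 5.1.1--5.1.2 for the face bookkeeping), which does not change the substance of the argument.
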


\begin{proof}
  We construct $\pP'$ in three steps. Firstly, we properly perturb the
  points in $\Pi_1$ so that $\pP_1'$ is simplicial, then
  we do the same for the points in $\Pi_m$, and, finally, we pull
  every vertex of $\pP$ in $\Pi_i$ for $2\le{}i\le{}m-1$.

  Let $v\in\Pi_1$ and choose any $v'$ in $\Pi_1$ beyond all the
  facets of $\pP$ other than $\pP_1$ that contain $v$.
  If we focus our attention on the hyperplane $\Pi_1$ and on the
  polytope $\pP_1\in\Pi_1$ we have that $v$ is a vertex of
  $\pP_1$ and $v'$ is some point beyond all the facets of $\pP_1$
  containing $v$.  
  We consider the polytope $\pP'= CH_{d+1}((P\sm\{v\})\cup\{v'\})$.
  In view of \cite[Theorems 5.1.1 \& 5.1.2]{g-cp-03} the $k$-faces of
  $\pP'$ are either:
  \begin{smallenum}
  \item $k$-faces of $\pP_1'$, or
  \item $k$-faces of $\pP$ not in $\Pi_1$ that do not contain
    $v$, or
  \item faces of the form $CH_{d+1}(\{v'\}\cup G_{k-1})$, where
    $G_{k-1}$ is a $(k-1)$-face not containing $v$ of a
    facet $F\nin\Pi_1$ of $\pP$ containing $v$.
  \end{smallenum}
  The number of $k$-faces of $\pP'$ does not change due to Case
  (ii). The number of $k$-faces of $\pP'$ may increase in Case (iii)
  if $G_{k-1}$ is contained in a non-simplicial $k$-face of $\pP$,
  otherwise it does not increase. Finally, Case (i) is a consequence
  of Theorem \ref{McMullen}(i).

  We set  $\pP:=\pP'$ and we repeat the above procedure for every
  point $v\in{}P_1$. After having perturbed all points in 
  $P_1$ we obtain a polytope $\pP'$ such that $\pP_1'$
  is a simplicial polytope.
  Moreover, we have the additional properties that
  $f_k(\pP)\le{}f_k(\pP')$, for $k\ge{}1$, and that for every
  $v'\in\Pi_1$ and for each face $F'\in\pP'$ not lying in $\Pi_1$,
  the point $v'$ does not lie in the affine hull of the vertices in
  $F'\sm\{v'\}$.
  We repeat exactly the same procedure on the points of $\Pi_m$.
  Again, we denote by $\pP$ the polytope we obtain after having
  perturbed the points in $P_1$ and $P_m$.

  It is left to pull the points in the remaining hyperplanes;
  in fact, we need only perturb the points in $P$ that are vertices of
  $\pP$.
  Consider a vertex $v$ of $\pP$, such that $v\in{}P_i$ for
  some $2\le{}i\le{}m-1$. Let $v'$ be any point in $\Pi_i$ that lies
  beyond all the facets of $\pP$ containing $v$.
  The choice of such a $v'$ is possible since $\Pi_i$ is a hyperplane
  containing $v$ but not a supporting hyperplane of $\pP$.
  The polytope $\pP'=CH_{d+1}((P\sm\{v\})\cup\{v'\})$ is the one we
  obtain by pulling the vertex $v$ to $v'$ and thus from Theorem
  \ref{McMullen}(i) we have $f_k(\pP)\le{}f_k(\pP')$ for all $k\ge{}1$.
  We continue the same procedure for every vertex $v$ of $\pP$ in
  $\Pi_i$, and $2\le{}i\le{}m-1$.
  After having pulled all the vertices of $\pP$ in $\Pi_i$, for
  $2\le{}i\le{}m-1$, we obtain a polytope $\pP'$ with the property
  that for every vertex $v'$ of $\pP'$ in $\Pi_i$, with
  $2\le{}i\le{}m-1$, and for every face $F'\in\pP'$, the point $v'$
  does not lie in the affine hull of the vertices in $F'\sm\{v'\}$.

  Summarizing all of the above, we deduce that, after having perturbed
  all the vertices of $\pP$ in $P$, we get a polytope $\pP'$ with the
  same number of vertices as $\pP$, such that:
  (1) $f_k(\pP)\le{}f_k(\pP')$ for all $k\ge{}1$,
  (2) $\pP_1'$ and $\pP_m'$ are simplicial, and
  (3) for every vertex $v'$ of $\pP'$ and every face $F'\in\pP'$, such
  that $F'$ does not lie in $\Pi_1$ or $\Pi_m$, the vertex $v'$ does
  not lie in the affine hull of the vertices in $F'\sm\{v'\}$.
  The latter implies that all the faces of $\pP'$, not in $\Pi_1$ or
  $\Pi_m$, are simplices. This completes our proof.
\end{proof}

\definecolor{MyGreen}{rgb}{0,0.82,0}
\definecolor{MyBlue}{rgb}{0,0,1}
\definecolor{MyRed}{rgb}{1,0,0}
\definecolor{MyBrown}{rgb}{0.749,0.38,0}
\definecolor{MyMagenta}{rgb}{0.808,0,0.808}
\begin{figure}[t]
  \begin{center}
    \psfrag{P1}[][]{\textcolor{MyGreen}{\tiny$\pP_1$}}
    \psfrag{P2}[][]{\tiny$\pP_2$}
    \psfrag{P3}[][]{\tiny$\pP_3$}
    \psfrag{P4}[][]{\textcolor{MyRed}{\tiny$\pP_4$}}
    \psfrag{P~}[][]{\textcolor{MyBrown}{\tiny$\tilde{\pP}$}}
    \psfrag{Pi1}[][]{\textcolor{MyGreen}{\tiny$\Pi_1$}}
    \psfrag{Pi2}[][]{\tiny$\Pi_2$}
    \psfrag{Pi3}[][]{\tiny$\Pi_3$}
    \psfrag{Pi4}[][]{\textcolor{MyRed}{\tiny$\Pi_4$}}
    \psfrag{Pi~}[][]{\textcolor{MyBrown}{\tiny$\tilde{\Pi}$}}
    \psfrag{C}[][]{\textcolor{MyBlue}{\tiny$\fF$}}
    \psfrag{L}[][]{\textcolor{MyGreen}{\tiny$\lL$}}
    \psfrag{Fj}[][]{\textcolor{MyBlue}{\tiny$F_j$}}
    \psfrag{Fj-1}[][]{\textcolor{MyBlue}{\tiny$F_{j-1}$}}
    \psfrag{Fj~}[][]{\textcolor{MyBrown}{\tiny$\tilde{F}_j$}}
    \psfrag{Fj-1~}[][]{\textcolor{MyBrown}{\tiny$\tilde{F}_{j-1}$}}
    \includegraphics[width=\textwidth]{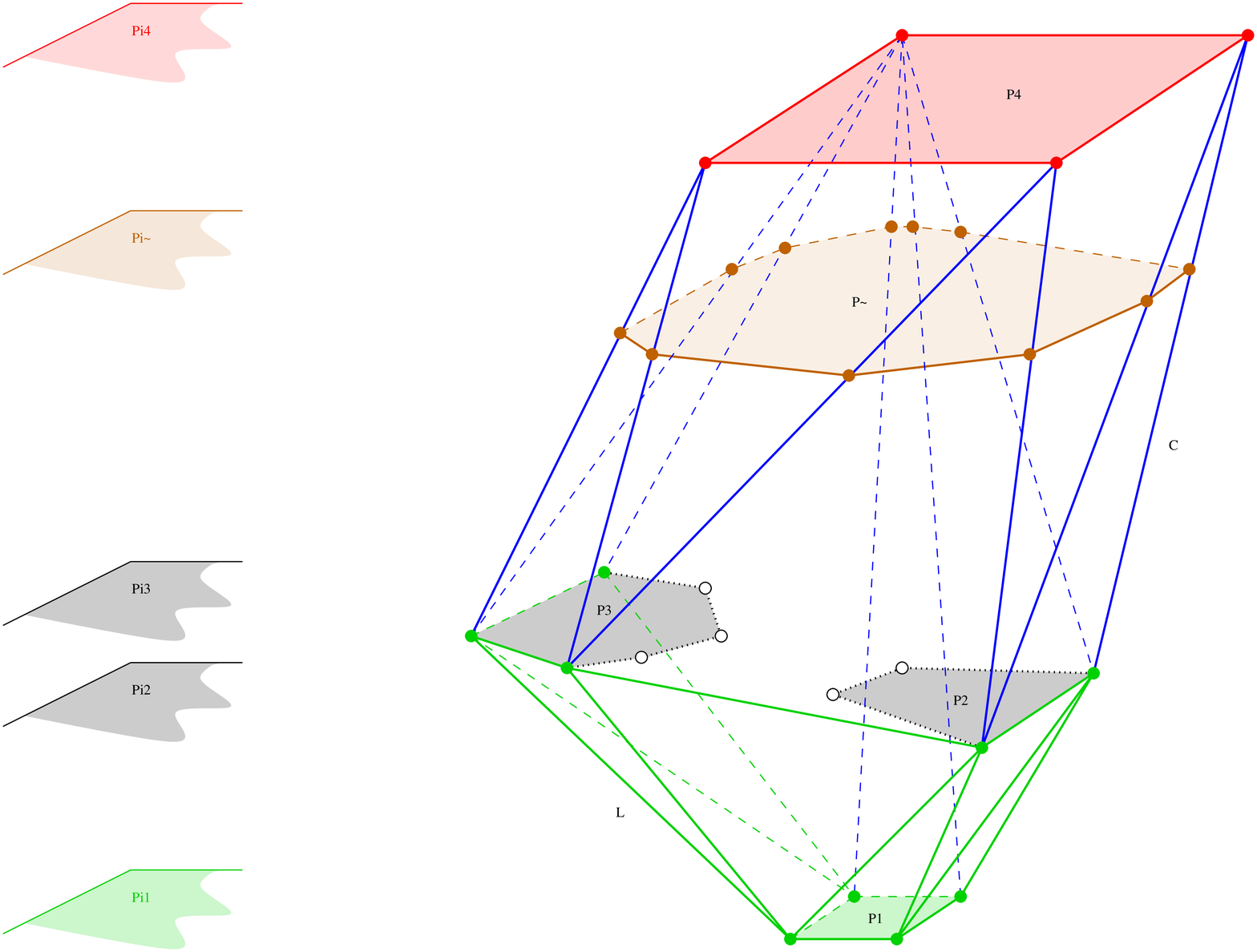}
  \end{center}
  \caption{An example of four polytopes lying on four parallel
    hyperplanes of $\RR^d$. The white vertices are not vertices of
    $\pP$. The polytopes $\pP_2$ and $\pP_3$ are shown in black. All
    faces of $\lL$ are shown in green (only the facet $\pP_1$ is
    shown), whereas all faces of $\pP_4$ are shown in red.
    The faces in blue are faces of $\fF$, whereas
    the faces in brown are faces of
    $\tilde{\pP}=\pP\cap\tilde{\Pi}$.}
  \label{fig:parallel_polytopes}
\end{figure}

In view of Lemma \ref{simplicial}, it suffices to restrict
our attention to sets of polytopes $\PPP$, where $\pP$ is simplicial
with the possible exception of its two facets $\pP_1$ and $\pP_m$.
Let $\tilde\Pi$ be any hyperplane between and parallel to the
hyperplanes $\Pi_{m-1} $ and $\Pi_m$ and consider the intersection
$\tilde\pP:=\pP\cap\tilde\Pi$ (see Fig. \ref{fig:parallel_polytopes}).
Let $\fF$ be the set of faces of $\pP$
having non-empty intersection with $\tilde{\Pi}$.
Note that $\tilde\pP$ is a $d$-polytope, which is, in general,
non-simplicial, and whose proper non-trivial faces
are intersections of the form $F\cap\tilde\Pi$ where $F\in\fF$.
Let $A=(\alpha_1,\alpha_2,\ldots,\alpha_m)$ and
$B=(\beta_1,\beta_2,\ldots,\beta_m)$ be two vectors in $\NN^m$. We
say that $A\preccurlyeq{}B$ if $\alpha_i\le\beta_i$ for all
$1\le{}i\le{}m$, and denote by $|A|$ the sum of the elements of
$A$, \ie $|A|=\sum_{i=1}^m\alpha_i$.
The following lemma provides an upper bound on the number of $k$-faces
of $\fF$.

\begin{lemma}\label{fboundhigh}
  The number of $k$-faces of $\fF$ is bounded from above as follows:
  \[   f_k(\fF)\le
  \sum_{\substack{(0,\ldots,0,1)\preccurlyeq{}A\preccurlyeq{}(k,\ldots,k)\\|A|=k+1}}
  \quad\prod_{i=1}^m\binom{f_0(\pP_i)}{\alpha_i},\qquad 1\le{}k\le{}d,
  \]
  where $\alpha_i$ is the $i$-th coordinate of the vector $A\in\NN^m$.
\end{lemma}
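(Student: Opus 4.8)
The plan is to classify the faces belonging to $\fF$ and then count them according to how their vertices are distributed among the hyperplanes $\Pi_1,\dots,\Pi_m$. First I would record that, for $1\le k\le d$, every $k$-face lying in $\fF$ is a simplex: by the reduction following Lemma~\ref{simplicial} we may assume the only non-simplicial proper faces of $\pP$ are its facets $\pP_1$ and $\pP_m$, and since $\pP_1\subseteq\Pi_1$ and $\pP_m\subseteq\Pi_m$, while $\tilde\Pi$ lies strictly between $\Pi_{m-1}$ and $\Pi_m$, neither $\pP_1$ nor $\pP_m$ meets $\tilde\Pi$. Hence a $k$-face $F\in\fF$ is the convex hull of exactly $k+1$ affinely independent vertices of $\pP$, and $F$ is determined by this vertex set; so it suffices to bound the number of vertex sets of $\pP$ that can occur as the vertex set of a $k$-face in $\fF$.

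Next I would extract the combinatorial constraints on such a vertex set. Given a $k$-face $F\in\fF$, let $\alpha_i$ be the number of vertices of $F$ lying on $\Pi_i$ and set $A=(\alpha_1,\dots,\alpha_m)\in\NN^m$, so $|A|=k+1$. Projecting $\RR^{d+1}$ orthogonally onto the line spanned by the common normal of the $\Pi_i$'s, the image of $F$ is the segment between the heights of the lowest and the highest hyperplanes carrying a vertex of $F$; for this segment to contain the height of $\tilde\Pi$ one needs a vertex of $F$ on $\Pi_m$, i.e.\ $\alpha_m\ge1$, and a vertex of $F$ strictly below $\tilde\Pi$, i.e.\ $\sum_{i<m}\alpha_i\ge1$. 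Combined with $|A|=k+1$, the former forces $\alpha_i\le k$ for all $i<m$ and the latter forces $\alpha_m\le k$; together with $\alpha_i\ge0$ this is exactly $(0,\dots,0,1)\preccurlyeq A\preccurlyeq(k,\dots,k)$.

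Then comes the counting. Since the hyperplanes are parallel and pairwise distinct, $P\cap\Pi_i=P_i$, so the vertices of $\pP$ on $\Pi_i$ form a subset of $P_i$ and number at most $n_i=f_0(\pP_i)$; hence the $\alpha_i$ vertices of $F$ on $\Pi_i$ can be chosen in at most $\binom{f_0(\pP_i)}{\alpha_i}$ ways, and the number of vertex sets with a given profile $A$ is at most $\prod_{i=1}^m\binom{f_0(\pP_i)}{\alpha_i}$. As each $k$-face of $\fF$ has a unique profile $A$, which satisfies the constraints above, summing over all admissible $A$ gives precisely the bound stated in the lemma.

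The whole argument is a direct classification-plus-count, so I do not expect a genuine obstacle; the two spots that require care are (i) certifying that no face in $\fF$ is non-simplicial --- which is why one invokes Lemma~\ref{simplicial} together with the position of $\tilde\Pi$ between $\Pi_{m-1}$ and $\Pi_m$ --- and (ii) correctly translating ``$F\cap\tilde\Pi\neq\emptyset$'' into the componentwise inequalities on $A$ via the vertical projection. The resulting estimate is intentionally an over-count (the true number of vertices of $\pP$ on $\Pi_i$ may be smaller than $f_0(\pP_i)$, and some admissible profiles are never realized), which is harmless here and is exactly the form needed for the inductive estimate that follows.
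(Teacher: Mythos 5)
Your proof is correct and follows essentially the same route as the paper's: reduce via Lemma~\ref{simplicial} to the case where all faces of $\pP$ outside $\Pi_1$ and $\Pi_m$ are simplices, classify each $k$-face of $\fF$ by the distribution $A=(\alpha_1,\dots,\alpha_m)$ of its $k+1$ vertices among the hyperplanes, and bound the faces with a given profile by $\prod_i\binom{f_0(\pP_i)}{\alpha_i}$. The only difference is that you spell out more explicitly (via the vertical projection) why $F\cap\tilde\Pi\neq\emptyset$ forces $1\le\alpha_m\le k$, a point the paper simply asserts.
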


\begin{proof}
  According to Lemma \ref{simplicial}, it suffices to consider
  the case where $\pP$ is simplicial except possibly for its facets
  $\pP_1$ and $\pP_m$. In this context, a $k$-face $F\in\fF$ is
  simplicial and it is defined by $k+1$ vertices of $P$, 
  where at least one vertex comes from $P_m$, whereas the remaining
  $k$ vertices are vertices of $P_1,P_2,\ldots,P_{m-1}$. Let
  $\alpha_i$ be the number of vertices of $F$ from $P_i$. Clearly, we
  have $0\le\alpha_i\le{}k$ for $1\le{}i\le{}m-1$, $1\le\alpha_m\le{}k$,
  and $\alpha_1+\alpha_2+\ldots+\alpha_m=k+1$. The maximum number of
  possible $(\alpha_i-1)$-faces of $\pP_i$ is $\binom{f_0(\pP_i)}{\alpha_i}$,
  which implies that the maximum possible number of $k$-faces of $\fF$
  is $\prod_{i=1}^m\binom{f_0(\pP_i)}{\alpha_i}$. Summing over all possible
  values for the $\alpha_i$'s we get the desired expression.
\end{proof}

%\begin{new}

Using the bounds for the face numbers of $\fF$ from Lemma
\ref{fboundhigh}, we arrive at the following lemma concerning bounds
on the number of $k$-faces of $\pP$, for small values of $k$.

\begin{lemma}\label{lem:fk-bound-low}
  Let $n_i=f_0(\pP_i)$, $1\le{}i\le{}m$. The following asymptotic
  bounds hold:
  \begin{equation*}
    f_k(\pP)=O(\sum_{1\le{}i\ne{}j\le{}m}n_in_j^{k}
    +\sum_{i=1}^mn_i^{\min\{k+1,\lexp{d}\}}),\qquad 0\le{}k\le{}\textstyle\lexp{d}.
  \end{equation*}
\end{lemma}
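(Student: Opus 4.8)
The plan is to argue by induction on the number $m$ of parallel hyperplanes, decomposing the faces of $\pP$ by means of the cutting hyperplane $\tilde\Pi$. The case $k=0$ is immediate, since $f_0(\pP)\le|P|=\sum_i n_i$, which is $O(\sum_{1\le i\ne j\le m}n_in_j^{0}+\sum_i n_i^{\min\{1,\lexp d\}})$; and the base case $m=1$ is just the Upper Bound Theorem together with the trivial bound: $\pP=\pP_1$ is a $d$-polytope on $n_1$ vertices, so after pulling its vertices (Theorem \ref{McMullen}) it becomes simplicial, whence $f_k(\pP_1)\le\binom{n_1}{k+1}=O(n_1^{k+1})$ and, by the Upper Bound Theorem, $f_k(\pP_1)=O(n_1^{\lexp d})$; the sum over $i\ne j$ is empty here.

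For the inductive step, fix $1\le k\le\lexp d$. Recalling that, by Lemma \ref{simplicial}, we may assume $\pP$ to be simplicial except possibly for its facets $\pP_1$ and $\pP_m$, and noting that $\tilde\Pi$ lies strictly between $\Pi_{m-1}$ and $\Pi_m$ and hence contains no vertex of $\pP$, the $k$-faces of $\pP$ split into three disjoint classes: (A) those meeting $\tilde\Pi$, which are exactly the $k$-faces of $\fF$; (B) those lying strictly above $\tilde\Pi$, all of whose vertices therefore lie on $\Pi_m$, so that they are $k$-faces of $\pP_m$; and (C) those lying strictly below $\tilde\Pi$, all of whose vertices lie in $P_1\cup\cdots\cup P_{m-1}$. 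For class (C) a standard argument — a supporting hyperplane of such a face $F$ of $\pP$ also supports $\qQ:=CH_{d+1}(P_1\cup\cdots\cup P_{m-1})$ and cuts out $F$ there — shows every such face is a $k$-face of $\qQ$, which is the convex hull of the $m-1$ parallel polytopes $\pP_1,\ldots,\pP_{m-1}$ (with $f_0(\pP_i)=n_i$); the induction hypothesis then bounds the number of faces in class (C) by $O(\sum_{1\le i\ne j\le m-1}n_in_j^{k}+\sum_{i=1}^{m-1}n_i^{\min\{k+1,\lexp d\}})$. For class (B), $\pP_m$ is a simplicial $d$-polytope on $n_m$ vertices, so $f_k(\pP_m)\le\binom{n_m}{k+1}=O(n_m^{k+1})$ and $f_k(\pP_m)=O(n_m^{\lexp d})$ by the Upper Bound Theorem, i.e. $O(n_m^{\min\{k+1,\lexp d\}})$; this is where the exponent $\lexp d$ (rather than $\lexp{d+1}$) enters, since $\pP_m$ lies in a $d$-flat.

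It remains to show that the estimate for $f_k(\fF)$ of Lemma \ref{fboundhigh} is itself $O(\sum_{1\le i\ne j\le m}n_in_j^{k})$. Since $m$ and $k$ are fixed, the sum over multi-indices $A$ there has $O(1)$ terms, so it suffices to bound a single term, $\prod_{i=1}^m\binom{n_i}{\alpha_i}\le\prod_{i=1}^m n_i^{\alpha_i}$. The key remark is that every admissible $A$ has $\alpha_m\le k<k+1=|A|$, hence at least two positive coordinates; picking an index $p$ with $\alpha_p\ge 1$ and $n_p$ maximal among such indices, and some other index $q\ne p$ with $\alpha_q\ge 1$, and using $n_i\le n_p$ whenever $\alpha_i\ge 1$, we get $\prod_i n_i^{\alpha_i}\le n_q^{\alpha_q}n_p^{k+1-\alpha_q}$ with $1\le\alpha_q\le k$. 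Finally, for reals $x,y\ge 0$ and $1\le a\le k$ one has $x^ay^{k+1-a}\le xy^k+x^ky$ (distinguish $x\le y$ from $x\ge y$), so $\prod_i n_i^{\alpha_i}\le n_qn_p^{k}+n_pn_q^{k}\le\sum_{1\le i\ne j\le m}n_in_j^{k}$. Adding the bounds for classes (A), (B), (C) and absorbing the $O(n_m^{\min\{k+1,\lexp d\}})$ term into $\sum_{i=1}^m n_i^{\min\{k+1,\lexp d\}}$ yields the stated estimate.

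I expect the only genuinely delicate point to be this last estimate — in particular the observation that every admissible multi-index has two positive coordinates, which is exactly what keeps $\prod_i n_i^{\alpha_i}$ away from the ``pure'' value $n_i^{k+1}$ and lets it be absorbed into the cross-term $\sum_{i\ne j}n_in_j^{k}$; the rest is routine face-count bookkeeping together with two applications of the Upper Bound Theorem, one for $\pP_m$ in class (B) and one (in the base case) for $\pP_1$.
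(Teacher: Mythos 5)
Your proof is correct and follows essentially the same route as the paper's: induction on $m$, splitting the faces of $\pP$ by the hyperplane $\tilde\Pi$ into the faces $\fF$ meeting $\tilde\Pi$, the faces of $\pP_m$, and the faces of the hull of $\pP_1,\ldots,\pP_{m-1}$ (handled by the induction hypothesis), with $\fF$ bounded via Lemma \ref{fboundhigh} and $\pP_m$ via the Upper Bound Theorem. Your explicit argument absorbing the multi-index bound of Lemma \ref{fboundhigh} into $\sum_{1\le i\ne j\le m}n_in_j^{k}$ simply fills in a step the paper asserts without detail.
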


\begin{proof}
  In what follows $k$ will be at most $\lexp{d}$, while we denote by
  $k'$ the quantity $k'=\min\{k+1,\lexp{d}\}$.
  Let $T(m)$ be the worst-case complexity of $f_k(\pP)$.
  We are going to prove, by induction on $m$, that for all $m\ge{}1$,
  $T(m)\le{}c\,(\sum_{1\le{}i\ne{}j\le{}m}n_in_j^{k}+\sum_{i=1}^mn_i^{k'})$,
  where $c$ is some appropriately large constant that depends only on $d$.

  The case $m=1$ is trivial since the number of $k$-faces of a
  $d$-polytope with $n_1$ vertices is in $O(n_1^{k'})$.
  Let us now assume that $m\ge{}2$ and that our statement holds for
  $m-1$; we shall prove it for $m$.
  To this end, we consider a set of $m$ $d$-polytopes
  $\PPP=\{\pP_1,\pP_2,\ldots,\pP_m\}$, lying on $m$ parallel
  hyperplanes $\Pi_1,\ldots,\Pi_m$ of $\RR^{d+1}$, such that $\Pi_j$
  is above $\Pi_i$ for all $j>i$.
  Let $n_i=f_0(\pP_i)$, $1\le{}i\le{}m$, and denote by $\pP$ the
  convex hull $CH_{d+1}(\PPP)$.
  Consider a hyperplane $\tilde\Pi$ parallel to $\Pi_{m-1}$ and $\Pi_m$
  and between them, and let $\tilde{\pP}=\pP\cap\tilde{\Pi}$ (refer to
  Fig. \ref{fig:parallel_polytopes}). We denote by $\lL$ the set
  of faces $F$ of $\pP$ such that $F\nin\pP_m$ and
  $F\cap\tilde{\Pi}=\emptyset$, and by $\fF$ the set of faces $F$ of
  $\pP$ with $F\cap\tilde{\Pi}\ne\emptyset$.
  Clearly, the set of faces of $\pP$ is equal to the disjoint union of
  $\lL$, $\fF$ and the set of faces of $\pP_m$; hence:
  \begin{equation}\label{equ:fk-decomp}
    f_k(\pP)=f_k(\lL)+f_k(\fF)+f_k(\pP_m).
  \end{equation}
  By the induction hypothesis we have that the number of $k$-faces of
  $CH_{d+1}(\PPP\sm\{\pP_m\})$ is at most $T(m-1)$. Since the
  $k$-faces in $\lL$ are $k$-faces of $CH_{d+1}(\PPP\sm\{\pP_m\})$, we
  have that $f_k(\lL)$ is at most $T(m-1)$.
  On the other hand, by Lemma \ref{fboundhigh}, we immediately get
  that $f_k(\fF)=O(n_m^k\sum_{i=1}^{m-1}n_i+n_m\sum_{i=1}^{m-1}n_i^k)$.
  Finally, since $\pP_m$ is a $d$-polytope, $f_k(\pP_m)=O(n_m^{k'})$.
  Combining these bounds with eq. \eqref{equ:fk-decomp}, we arrive at
  the following recurrence relation for $T(m)$:
  {
    \setlength\abovedisplayskip{4pt plus 3pt minus 7pt}
    \setlength\belowdisplayskip{4pt plus 3pt minus 7pt}
    \[ T(m)\le{}T(m-1)+
    O(n_m^{k}\sum_{i=1}^{m-1}n_i+n_m\sum_{i=1}^{m-1}n_i^{k})+O(n_m^{k'}).
    \]
  }%
  It is straightforward to verify that $T(m)$ satisfies:
  $T(m)\le{}c\,(\sum_{1\le{}i\ne{}j\le{}m}n_in_j^{k}
  +\sum_{i=1}^mn_i^{k'})$, for some appropriately large
  constant $c$ (that depends only on $d$); this establishes our claim.
\end{proof}

Exploiting the bounds from Lemma \ref{lem:fk-bound-low} for
$0\le{}k\le\lexp{d}$, and using the Dehn-Sommerville equations of an
appropriately defined simplicial $(d+1)$-polytope containing all
faces in $\pP$, except for the facets $\pP_1$ and $\pP_m$, we derive
asymptotic bounds on $f_k(\pP)$, for all $\lexp{d+1}\le{}k\le{}d$.
Our results are summarized in the following lemma.

\begin{lemma}\label{lem:fk-bound-high}
  Let $n_i=f_0(\pP_i)$, $1\le{}i\le{}m$. The following asymptotic
  bounds hold:
  \begin{equation*}
    f_k(\pP)=O(\sum_{1\le{}i\ne{}j\le{}m}n_in_j^{\lexp{d}}),
    \qquad {\textstyle\lexp{d+1}}\le{}k\le{}d.
  \end{equation*}
\end{lemma}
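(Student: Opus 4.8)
The plan is to bound the high face numbers of $\pP$ by its low ones — which are under control by Lemma~\ref{lem:fk-bound-low} — by embedding the simplicial part of $\pP$ into a genuinely \emph{simplicial} $(d+1)$-polytope and invoking the Dehn--Sommerville equations.

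By part~(iii) of Lemma~\ref{simplicial} it suffices to prove the bound for the perturbed polytope, which we keep denoting $\pP$; thus we may assume that every face of $\pP$ other than the facets $\pP_1$ and $\pP_m$ is a simplex, and, since the perturbed configuration is again a convex hull of $m$ polytopes lying on $m$ parallel hyperplanes with at most $n_i$ vertices on the $i$-th one, Lemma~\ref{lem:fk-bound-low} still applies to it. Note that all proper faces of $\pP_1$ (\resp $\pP_m$) are faces of $\pP$ of dimension at most $d-1$, hence different from $\pP_1$ and $\pP_m$, so they are simplices; therefore $\pP_1$ and $\pP_m$ are themselves simplicial $d$-polytopes. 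Now choose a point $u$ slightly below $\Pi_1$ and near the relative interior of $\pP_1$, so that $u$ is beyond the facet $\pP_1$ of $\pP$ and beneath every other facet, and symmetrically a point $w$ slightly above $\Pi_m$, beyond $\pP_m$ and beneath every other facet; set $\qQ:=CH_{d+1}(P\cup\{u,w\})$. Applying the beneath--beyond description of the face lattice (\cite[Theorems 5.1.1 \& 5.1.2]{g-cp-03}), first for $u$ and then for $w$, the faces of $\qQ$ are precisely the faces of $\pP$ other than $\pP_1$ and $\pP_m$, together with the faces $CH_{d+1}(\{u\}\cup G)$ for every proper face $G$ of $\pP_1$ (including $G=\emptyset$) and the faces $CH_{d+1}(\{w\}\cup G)$ for every proper face $G$ of $\pP_m$. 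Since $\pP_1$ and $\pP_m$ are simplicial and $u\nin\Pi_1$, $w\nin\Pi_m$, each of these faces is a simplex, so $\qQ$ is a simplicial $(d+1)$-polytope. Moreover $f_k(\pP)\le f_k(\qQ)$ for all $1\le k\le d$: for $k<d$ the only faces of $\pP$ missing from $\qQ$ have dimension $d$, and for $k=d$ one computes $f_d(\qQ)=(f_d(\pP)-2)+f_{d-1}(\pP_1)+f_{d-1}(\pP_m)\ge f_d(\pP)$.

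Next I would read off the low face numbers of $\qQ$. From the face description, for $1\le j\le\lexp{d}$ the $j$-faces of $\qQ$ are the $j$-faces of $\pP$ together with the cones over the $(j-1)$-faces of $\pP_1$ and of $\pP_m$, so $f_j(\qQ)=f_j(\pP)+f_{j-1}(\pP_1)+f_{j-1}(\pP_m)$ (and $f_0(\qQ)=f_0(\pP)+2$). Since $\pP_1$ and $\pP_m$ are $d$-polytopes with at most $n_1$ and $n_m$ vertices, $f_{j-1}(\pP_i)=O(n_i^{j})$ trivially, while Lemma~\ref{lem:fk-bound-low} gives $f_j(\pP)=O(\sum_{1\le i\ne i'\le m}n_in_{i'}^{j}+\sum_{i=1}^{m}n_i^{\min\{j+1,\lexp{d}\}})$. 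Using $n_i\ge 1$ and $m\ge 2$ — so that $n_{i'}^{j}\le n_{i'}^{\lexp{d}}$, and, picking any $i'\ne i$, $n_i^{\lexp{d}}\le n_{i'}n_i^{\lexp{d}}$, whence $\sum_i n_i^{\lexp{d}}=O(\sum_{i\ne i'}n_in_{i'}^{\lexp{d}})$ — every one of these terms is $O(\sum_{1\le i\ne i'\le m}n_in_{i'}^{\lexp{d}})$. Hence $f_j(\qQ)=O(\sum_{1\le i\ne i'\le m}n_in_{i'}^{\lexp{d}})$ for all $0\le j\le\lexp{d}$.

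Finally, since $d$ is odd, $d+1$ is even and $\lexp{d}=\frac{d+1}{2}-1$, so for the simplicial $(d+1)$-polytope $\qQ$ the Dehn--Sommerville equations $h_i(\qQ)=h_{d+1-i}(\qQ)$ express the whole $f$-vector of $\qQ$ — in particular $f_k(\qQ)$ for $\lexp{d+1}\le k\le d$ — as a linear combination, with coefficients depending only on $d$, of $f_0(\qQ),f_1(\qQ),\ldots,f_{\lexp{d}}(\qQ)$ (together with the constant $f_{-1}(\qQ)=1$). Consequently $f_k(\qQ)=O(\sum_{j=0}^{\lexp{d}}f_j(\qQ))=O(\sum_{1\le i\ne j\le m}n_in_j^{\lexp{d}})$ for $\lexp{d+1}\le k\le d$, and $f_k(\pP)\le f_k(\qQ)$ yields the claim. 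The points requiring care are getting the beneath--beyond face count of $\qQ$ exactly right — both to see that $\qQ$ is simplicial and to obtain $f_d(\pP)\le f_d(\qQ)$ — and tracking the Dehn--Sommerville relations over the correct range of indices; the remaining estimates are routine.
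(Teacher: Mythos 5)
Your proposal is correct and follows essentially the same route as the paper: you build the same auxiliary polytope $\qQ$ by adding one point beyond $\pP_1$ and one beyond $\pP_m$ (the paper's $y$ and $z$), obtain the same exact relation $f_j(\qQ)=f_j(\pP)+f_{j-1}(\pP_1)+f_{j-1}(\pP_m)$ in low dimensions, and then use the Dehn--Sommerville equations of the simplicial $(d+1)$-polytope $\qQ$ together with $f_k(\pP)\le f_k(\qQ)$ to transfer the Lemma~\ref{lem:fk-bound-low} bounds to $\lexp{d+1}\le k\le d$. The only differences are cosmetic: you cite the beneath--beyond theorems (and should, as you note, verify that $w$ stays beneath the new facets created by $u$, e.g.\ by choosing $w$ after $u$) where the paper argues with supporting hyperplanes directly, and you invoke the Dehn--Sommerville determination of the high $f$-numbers abstractly where the paper writes out the $h$-vector bounds explicitly.
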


\begin{proof}
  Let $y$ (\resp $z$) be a point below $\Pi_1$ (\resp above $\Pi_m$),
  such that the vertices of $\pP_1$ (\resp $\pP_m$) are the only
  vertices of $\pP$ visible from $y$ (\resp $z$). To achieve this, we
  choose $y$ (\resp $z$) to be a point beyond the facet $\pP_1$ (\resp
  $\pP_m$) of $\pP$, and beneath every other facet of $\pP$.
  Let $Q$ be the set of points consisting of $y$, $z$ and the vertices
  of $\pP$, and let $\qQ=CH_{d+1}(Q)$ (refer to Fig. \ref{fig:Q}).
  Observe that the faces of $\pP$, except for the facets $\pP_1$ and
  $\pP_m$, are all faces of $\qQ$. To see that, notice that a
  supporting hyperplane $H_F$ for a facet $F\in\pP$, with
  $F\ne\pP_1,\pP_m$, is also a supporting hyperplane for
  $\qQ$. Indeed, the vertices of $F$ are vertices in
  $Q\setminus\{y,z\}\equiv{}P$, %and thus every vertex in $P$ that is
%  not a vertex of $F$ is beneath $F$ (with respect to
%  $\qQ$). Furthermore, both $y$ and $z$ are beneath 
%  $F$ by construction, which implies that $F$ is a facet of $\qQ$.
  %
  and thus, every vertex in $P$ that is not a vertex of $F$ strictly
  satisfies all hyperplane inequalities for $\pP$. Also,  by
  construction, the points $y$ and $z$ strictly satisfy all hyperplane
  inequalities apart from those for $\Pi_1$ and $\Pi_m$,
  respectively. Since $H_F$ is a hyperplane other than $\Pi_1$ and
  $\Pi_m$ we deduce that all points in $P\cup\{y,z\}$  lie on the same
  halfspace defined by $H_F$, and therefore $H_F$ supports $\qQ$.  
  The faces of $\qQ$
  that are not faces of $\pP$ are the faces in the star $\sS_y$ of $y$
  and the star $\sS_z$ of $z$. To verify this, consider a $k$-face $F$
  of $\pP_1$, and let $F_1$ be a face in $\pP_1$ that contains $F$. Let
  $H_1$ be a supporting hyperplane of $F_1$ with respect to
  $\pP$. Tilt $H_1$ until it hits the point $y$, while keeping $H_1$
  incident to $F_1$, and call $H_1'$ this tilted hyperplane. $H_1'$ is
  a supporting hyperplane for $y$ and the vertex set of $\pP_1$, and
  thus is a supporting hyperplane for $\qQ$. The same argument can be
  applied for the star of $z$. In fact, the boundary complex
  $\partial\pP_1$ (\resp $\partial\pP_m$) of $\pP_1$ (\resp $\pP_m$)
  is nothing but the link of $y$ (\resp $z$) in $\qQ$.

  \newcommand{\labelsize}{\small}
  \begin{figure}[p]
    \begin{center}
      \psfrag{P1}[][]{\textcolor{MyGreen}{\labelsize$\pP_1$}}
      \psfrag{P2}[][]{\labelsize$\pP_2$}
      \psfrag{P3}[][]{\labelsize$\pP_3$}
      \psfrag{P4}[][]{\textcolor{MyRed}{\labelsize$\pP_4$}}
      \psfrag{Pi1}[][]{\textcolor{MyGreen}{\labelsize$\Pi_1$}}
      \psfrag{Pi2}[][]{\labelsize$\Pi_2$}
      \psfrag{Pi3}[][]{\labelsize$\Pi_3$}
      \psfrag{Pi4}[][]{\textcolor{MyRed}{\labelsize$\Pi_4$}}
      \psfrag{F}[][]{\textcolor{MyBlue}{\labelsize$\fF$}}
      \psfrag{L}[][]{\textcolor{MyGreen}{\labelsize$\lL$}}
      \psfrag{y}[][]{\textcolor{MyMagenta}{\labelsize$y$}}
      \psfrag{z}[][]{\textcolor{MyMagenta}{\labelsize$z$}}
      \includegraphics[width=0.95\textwidth]{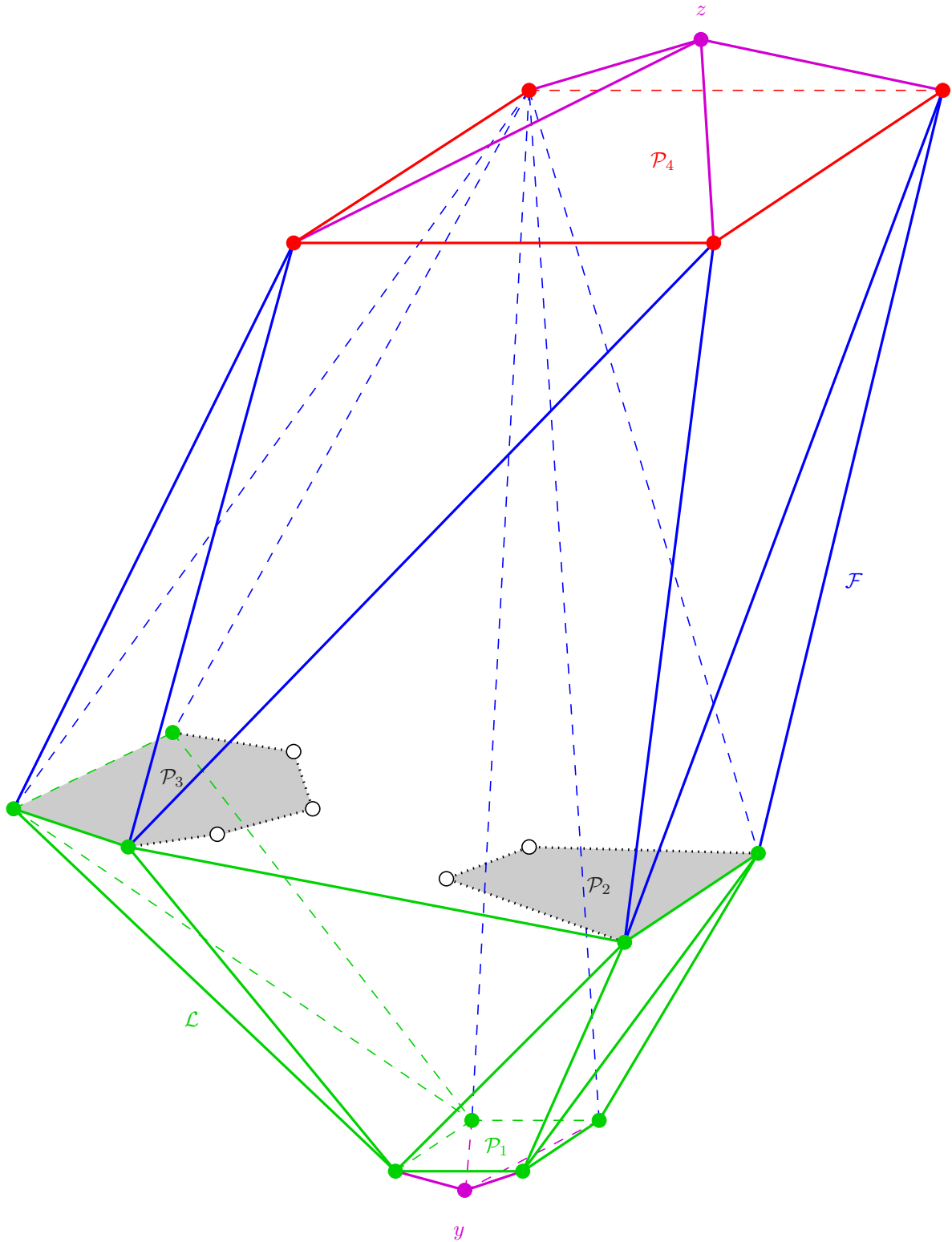}
    \end{center}
    \caption{The polytope $\qQ$. $\qQ$ is constructed by adding a
      point $y$ below $\Pi_1$ and a point $z$ above $\Pi_m$, such the
      only vertices of $\pP$ visible from $y$ (\resp $z$) are the
      vertices of $\pP_1$ (\resp $\pP_m$).}
    \label{fig:Q}
  \end{figure}

  It is easy to realize that, for $k<d$, the set of $k$-faces of $\qQ$
  is the union of the $k$-faces of $\pP$, $\sS_y$ and $\sS_z$, where
  we have double-counted the $k$-faces of $\partial\pP_1$ and
  $\partial\pP_m$.
  For $k=d$ the set of facets of $\qQ$ is the union of the set of
  facets of $\sS_y$, $\sS_z$ and $\pP$, except for its two facets
  $\pP_1$ and $\pP_m$. We can concisely write the relations described
  above as:
  \begin{equation}\label{equ:fkQ}
    f_k(\qQ)
    =f_k(\pP)+f_k(\sS_y)+f_k(\sS_z)-f_k(\pP_1)-f_k(\pP_m),
    \quad 0\le{}k\le{}d.
  \end{equation}
  where $f_d(\pP_1)=f_d(\pP_m)=1$. The $k$-faces of $\qQ$ in $\sS_z$ are
  either $k$-faces of $\partial\pP_m$ or $k$-faces defined by $z$ and a
  $(k-1)$-face of $\partial\pP_m$. In fact, there exists a bijection between
  the $(k-1)$-faces of $\partial\pP_m$ and the $k$-faces of $\sS_z$ containing
  $z$. Hence, we have:
  \begin{equation}\label{equ:fkSz}
    f_k(\sS_z)=f_k(\partial\pP_m)+f_{k-1}(\partial\pP_m),\quad 0\le{}k\le{}d,
  \end{equation}
  where $f_{-1}(\partial\pP_m)=1$ and $f_d(\partial\pP_m)=0$.
  Analogously, the $k$-faces of $\qQ$ in $\sS_y$ are either $k$-faces
  of $\partial\pP_1$ or $k$-faces defined by $y$ and a $(k-1)$-face of
  $\partial\pP_1$. As for $\sS_z$, there exists a bijection between
  the $(k-1)$-faces of $\partial\pP_1$ and the $k$-faces of $\sS_y$
  containing $y$. Hence, we have:
  \begin{equation}\label{equ:fkSy}
    f_k(\sS_y)=f_k(\partial\pP_1)+f_{k-1}(\partial\pP_1),\quad 0\le{}k\le{}d,
  \end{equation}
  where $f_{-1}(\partial\pP_1)=1$ and $f_d(\partial\pP_1)=0$.
  Therefore, relation \eqref{equ:fkQ} can be rewritten as:
  \begin{equation*}
    f_k(\qQ)=f_k(\pP)+f_k(\partial\pP_1)+f_{k-1}(\partial\pP_1)-f_k(\pP_1)
    +f_k(\partial\pP_m)+f_{k-1}(\partial\pP_m)-f_k(\pP_m),
    \quad 0\le{}k\le{}d.
  \end{equation*}
  or, in the much simpler form:
  \begin{equation}\label{equ:fkQ-simplified}
    f_k(\qQ)=f_k(\pP)+f_{k-1}(\partial\pP_1)+f_{k-1}(\partial\pP_m)-\alpha_k,
    \quad 0\le{}k\le{}d,
  \end{equation}
  where $\alpha_k=0$, for $0\le{}k<d$, whereas $\alpha_d=2$.
  
  Let us now turn our attention on deriving bounds for the face
  numbers $f_k(\qQ)$. Since $\partial\pP_\ell$, $\ell=1,m$, is the
  boundary complex of the $n_\ell$-vertex $d$-polytope $\pP_\ell$, we
  have:
  \begin{equation}\label{equ:fkP1+Pm}
      f_k(\partial\pP_\ell)=O(n_\ell^{\min\{k+1,\lexp{d}\}}),
      \quad 0\le{}k\le{}d,
      \quad\ell=1,m.
  \end{equation}
  Combining the bounds from Lemma \ref{lem:fk-bound-low} with relations
  \eqref{equ:fkQ-simplified} and \eqref{equ:fkP1+Pm}, we get the
  following bounds for the lower-dimensional face numbers of $\qQ$:
  \begin{equation}\label{equ:fboundlowd}
    f_k(\qQ)
    =O(\sum_{1\le{}i\ne{}j\le{}m}n_in_j^k+\sum_{i=1}^{m}n_i^{\min\{k+1,\lexp{d}\}})
    =O(\sum_{1\le{}i\ne{}j\le{}m}n_in_j^{\lexp{d}}),
    \quad 0\le{}k\le{\textstyle\lexp{d}}.
  \end{equation}
  However, $\qQ$ is a simplicial $(d+1)$-polytope, since the facets of
  $\pP$ are simplicial, and since all facets of $\sS_y$ and $\sS_z$
  are simplicial (the facets of $\sS_y$ and $\sS_z$ are defined via
  simplicial $(d-1)$-faces of $\pP$ and the points $y$ and $z$,
  respectively).
  Let us recall the defining equations for the elements of the
  $h$-vector of $\qQ$ in terms of the elements of the $f$-vector
  of $\qQ$:
  \begin{equation}\label{equ:hdef}
    h_k(\qQ)=
    \sum_{i=0}^{k}(-1)^{k-i}\binom{d+1-i}{d+1-k}f_{i-1}(\qQ),\quad
    0\le{}k\le{}d+1.
  \end{equation}
  Combining equations \eqref{equ:hdef} with relations
  \eqref{equ:fboundlowd}, as well as the fact that
  $f_{-1}(\qQ)=1$, we get:
  \begin{equation}\label{equ:hbound}
    h_k(\qQ)=O(\sum_{1\le{}i\ne{}j\le{}m}n_in_j^{\lexp{d}}),
    \quad 0\le{}k\le{}{\textstyle\lexp{d+1}}.
  \end{equation}
  We are now going to use the Dehn-Sommerville equations for
  $\qQ$ to bound the number of $k$-faces $f_k(\qQ)$
  of $\qQ$, for $k\ge\lexp{d+1}$. The Dehn-Sommerville equations
  can be rewritten as follows (cf. \cite[Section 8.4]{z-lp-95}):
  \begin{equation}\label{equ:dsalt}
    f_{k-1}(\qQ)=\sideset{}{^{\,*}}{\sum}_{i=0}^{\frac{d+1}{2}}
    \left(\binom{d+1-i}{k-i}+\binom{i}{k-d-1+i}\right)h_i(\qQ),
    \quad {\textstyle\lexp{d+1}}\le{}k\le{}d+1,
  \end{equation}
  where the symbol
  $\displaystyle\sideset{}{^{\,*}}{\sum}_{i=0}^{\frac{\delta}{2}}$ denotes the
  sum where the last term is halfed if and only if the quantity
  $\frac{\delta}{2}$ is integral (which is our case since in our
  setting $\delta=d+1$, which is even).
  Combining relations \eqref{equ:hbound} and \eqref{equ:dsalt}, we get:
  \begin{equation}\label{equ:fboundhighd}
    f_{k-1}(\qQ)=O(\sum_{1\le{}i\ne{}j\le{}m}n_in_j^{\lexp{d}}),
    \quad {\textstyle\lexp{d+1}}\le{}k\le{}d+1.
  \end{equation}
  Now using relation \eqref{equ:fkQ-simplified}, we arrive at the
  following bounds on the number of $k$-faces of $\pP$ for
  $\lexp{d+1}\le{}k\le{}d$:
  \begin{equation*}%\label{equ:fboundhighd2}
    f_k(\pP)=f_k(\qQ)-f_{k-1}(\partial\pP_1)-f_{k-1}(\partial\pP_m)+\alpha_k
    \le{}f_k(\qQ)
    =O(\sum_{1\le{}i\ne{}j\le{}m}n_in_j^{\lexp{d}}),
  \end{equation*}
  where we used the fact that $f_{d-1}(\partial\pP_1)>\alpha_d$,
  since $f_{d-1}(\partial\pP_1)\ge{}d+1\ge{}4$.
%  This concludes the proof.
\end{proof}

By Lemmas \ref{lem:fk-bound-low} and \ref{lem:fk-bound-high}, we
deduce that the worst-case complexity of the convex hull
$CH_{d+1}(\PPP)$ is $O(\sum_{1\le{}i\ne{}j\le{}m}n_in_j^{\lexp{d}})$.
As we will see in Subsection \ref{subsec:lbmult} (see Corollary
\ref{cor:polytopeslb}), this bound is asymptotically tight for any odd
$d\ge{}3$. Hence:

\begin{theorem}\label{main}
  Let $\PPP=\{\pP_1,\pP_2,\ldots,\pP_m\}$ be a set of a fixed number
  of $m\ge{}2$ $d$-polytopes, lying on $m$ parallel hyperplanes of
  $\RR^{d+1}$, where $d\ge{}3$ and $d$ is odd. The worst-case
  complexity of the convex hull $CH_{d+1}(\PPP)$ is 
  $\Theta(\sum_{1\le{}i\ne{}j\le{}m}n_in_j^{\lexp{d}})$,
  where $n_i=f_0(\pP_i)$, $1\le{}i\le{}m$.
\end{theorem}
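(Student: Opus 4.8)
The plan is to derive the upper bound by merely assembling the face-count estimates of Lemmas~\ref{lem:fk-bound-low} and~\ref{lem:fk-bound-high}, and to import the matching lower bound from the sphere construction carried out later in the paper.

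For the upper bound, observe that $CH_{d+1}(\PPP)$ is exactly the polytope $\pP=CH_{d+1}(P)$, so its combinatorial complexity is $\sum_{k=0}^{d}f_k(\pP)$, a sum of $d+1=O(1)$ terms. The one arithmetic observation that makes everything fit together is that, $d$ being odd, $\lexp{d+1}=\lexp{d}+1$; hence the ranges $0\le{}k\le\lexp{d}$ and $\lexp{d+1}\le{}k\le{}d$ partition $\{0,1,\ldots,d\}$. For $k$ in the upper range, Lemma~\ref{lem:fk-bound-high} already gives $f_k(\pP)=O(\sum_{1\le{}i\ne{}j\le{}m}n_in_j^{\lexp{d}})$. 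For $k$ in the lower range I would start from Lemma~\ref{lem:fk-bound-low}, namely $f_k(\pP)=O(\sum_{i\ne{}j}n_in_j^{k}+\sum_i n_i^{\min\{k+1,\lexp{d}\}})$, and absorb both sums into $\sum_{i\ne{}j}n_in_j^{\lexp{d}}$: since every $n_i\ge{}d+1\ge 1$ we have $n_in_j^{k}\le{}n_in_j^{\lexp{d}}$; and if $j^\star$ maximizes $n_\ell$ and $i^\star\ne{}j^\star$ is arbitrary (available because $m\ge 2$), then $n_i^{\lexp{d}}\le{}n_in_{j^\star}^{\lexp{d}}$ for $i\ne{}j^\star$ while $n_{j^\star}^{\lexp{d}}\le{}n_{i^\star}n_{j^\star}^{\lexp{d}}$, so each term of $\sum_i n_i^{\lexp{d}}$ is, up to a constant factor, a term of $\sum_{i\ne{}j}n_in_j^{\lexp{d}}$. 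Thus $f_k(\pP)=O(\sum_{i\ne{}j}n_in_j^{\lexp{d}})$ for every $0\le{}k\le{}d$, and summing over the constantly many values of $k$ yields the $O$-bound.

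For the lower bound I would invoke Corollary~\ref{cor:polytopeslb}: the construction of Section~\ref{sec:chs} produces, for any prescribed $n_1,\ldots,n_m$, a family of $m$ $d$-polytopes lying on $m$ parallel hyperplanes of $\RR^{d+1}$ (the lift of the extremal sphere set with $m$ pairwise distinct radii) whose convex hull has complexity $\Omega(\sum_{1\le{}i\ne{}j\le{}m}n_in_j^{\lexp{d}})$. Combining this with the upper bound gives the claimed $\Theta$ estimate.

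Honestly, there is no substantial obstacle left at this stage: the conceptually hard move — converting the low-dimensional face counts of Lemma~\ref{lem:fk-bound-low} into high-dimensional ones — was already executed in Lemma~\ref{lem:fk-bound-high} by passing to the auxiliary simplicial $(d+1)$-polytope $\qQ$ and applying its Dehn-Sommerville relations, while the lower bound is outsourced to Section~\ref{sec:chs}. The only point that genuinely has to be got right here is the parity of $d$, which is precisely what lets the two $k$-ranges meet without a gap; were $d$ even, the identity $n^{\lexp{d}}=n^{\cexp{d}}$ would collapse the whole refinement, which is why the statement is restricted to odd $d$.
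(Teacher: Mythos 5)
Your proposal is correct and follows essentially the same route as the paper: the upper bound is obtained by combining Lemmas~\ref{lem:fk-bound-low} and~\ref{lem:fk-bound-high} over the two complementary ranges of $k$ (with the absorption of $\sum_i n_i^{\min\{k+1,\lexp{d}\}}$ into $\sum_{i\ne j}n_in_j^{\lexp{d}}$ using $m\ge 2$, which the paper leaves implicit), and the matching lower bound is imported from Corollary~\ref{cor:polytopeslb}. No gaps.
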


%\end{new}
% chp-algo.tex

\section{Computing the convex hull of parallel convex polytopes}
\label{sec:chp-algo}

The upper bound on the worst-case complexity of $CH_{d+1}(\PPP)$ in
Theorem \ref{main} suggests that, in order to compute $CH_{d+1}(\PPP)$
for $d\ge{}3$ odd, it pays off to use an output-sensitive algorithm for
constructing the convex hull of the point set $P$ formed by the
vertices of the $\pP_i$'s.

Let us briefly discuss the output-sensitive algorithms applicable in
our setting; in what follows $n_i$ is the number of vertices of
$\pP_i$, $n$ is the total number of input points and $f$ the size of
the output. The dimension $d+1\ge{}4$ below is the dimension of the
ambient space $\RR^{d+1}$ and is considered to be even; the polytopes
$\pP_i$ are then $d$-dimensional.
One of the earliest algorithms is Seidel's shelling algorithm
\cite{s-chdch-86} that runs in $O(n^2+f\log{}n)$ time.
The preprocessing step of Seidel's algorithm was later on improved by
Matou{\v{s}}ek and Schwarzkopf \cite{ms-loq-92}, resulting in an
$O(n^{2-2/(\lfloor{}(d+1)/2\rfloor+1)+\epsilon}+f\log{}n)$ time algorithm,
for any fixed $\epsilon>0$.
Chan, Snoeyink and Yap \cite{csy-pddpo-97} describe a
divide-and-conquer algorithm for constructing four-dimensional convex
hulls in $O((n+f)\log^2f)$ time.
Finally, Chan \cite{c-osrch-96} improved the gift-wrapping algorithm
of Chand and Kapur \cite{ck-acp-70}, yielding an
$O(n\log{}f+(nf)^{1-1/(\lfloor{}(d+1)/2\rfloor+1)}\log^{O(1)}n)$ time
algorithm.

In our case we have
$f=O(\sum_{1\le{}i\ne{}j\le{}m}n_in_j^{\alpha-1})$, where
$\alpha=\lexp{d+1}\ge{}2$ (recall that we consider only odd $d$).
Applying the algorithms above we get the following. Seidel's algorithm runs in
$O(n^2+(\sum_{1\le{}i\ne{}j\le{}m}n_in_j^{\alpha-1})\log{}n)$ time.
Matou{\v{s}}ek and Schwarzkopf's improvement results in an
$O(n^{2-2/(\alpha+1)+\epsilon}+(\sum_{1\le{}i\ne{}j\le{}m}n_in_j^{\alpha-1})\log{}n)$
time algorithm. Chan's algorithm yields a running time of
$O((\sum_{1\le{}i\ne{}j\le{}m}n_i^{\alpha/(\alpha+1)}n_j^{\alpha^2/(\alpha+1)})
\log^{O(1)}n)$. Finally,
the algorithm by Chan, Snoeyink and Yap gives 
$O((\sum_{1\le{}i\ne{}j\le{}m}n_in_j)\log^2n)$.
For all $d\ge{}3$, \ie for $\alpha\ge{}2$, Chan's algorithm is never
faster than the other two algorithms. For $d\ge{}5$, i.e.,
$\alpha\ge{}3$, it makes no difference to choose between the algorithm
of Seidel and that of Matou{\v{s}}ek and Schwarzkopf.
For $d=3$ the situation is a bit more complicated. For some choices
for the $n_i$'s Matou{\v{s}}ek and Schwarzkopf's algorithm outperforms
all other algorithms, whereas for other choices of the $n_i$'s the
algorithm of Chan, Snoeyink and Yap is the best choice.
Consider, for example, the cases:
\begin{smallenum}
\item[(1)] $n_1=\Theta(n)$,
  $n_2=o(\frac{n^{1/3+\epsilon}}{\log^2{}n})$ and $n_i=O(1)$ for
  $3\le{}i\le{}m$, and
\item[(2)] $n_1=\Theta(n)$,
  $n_2=\omega(\frac{n^{1/3+\epsilon}}{\log{}n})$ and $n_i=O(1)$ for
  $3\le{}i\le{}m$.
\end{smallenum}
In the first case Chan, Snoeyink and Yap's algorithm runs in
$o(n^{4/3+\epsilon})$ time and outperforms all other algorithms. In
the second case, Matou{\v{s}}ek and Schwarzkopf's algorithm seems to
be the best choice, since its running time is $O(n_2n\log{}n)$
(compared to Chan, Snoeyink and Yap's algorithm, the running time of
which is $O(n_2n\log^2{}n)$). 
Summarizing, we arrive at the following:

\begin{theorem}\label{chcp_time}
  Let $\PPP=\{\pP_1,\pP_2,\ldots,\pP_m\}$ be a set of $m$
  $d$-polytopes, lying on $m$ parallel hyperplanes of
  $\RR^{d+1}$, where $d\ge{}3$ and $d$ odd. 
  Let $n_i=f_0(\pP_i)$, $1\le{}i\le{}m$, and
  $n=\sum_{i=1}^mn_i$.
  We can compute the convex hull $CH_{d+1}(\PPP)$ in
  $O((\sum_{1\le{}i\ne{}j\le{}m}n_in_j^{\lexp{d}})\log{}n)$ time for
  $d\ge{}5$, and
  $O(\min\{n^{4/3+\epsilon}+(\sum_{1\le{}i\ne{}j\le{}m}n_in_j)\log{}n,
  (\sum_{1\le{}i\ne{}j\le{}m}n_in_j)\log^2n\})$ time for $d=3$, for
  any fixed $\epsilon>0$.
\end{theorem}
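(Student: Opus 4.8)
The plan is to reduce the problem to computing the convex hull of a point set and then to invoke known output-sensitive convex hull algorithms, using Theorem~\ref{main} to control the output size. First I would note that $CH_{d+1}(\PPP)=CH_{d+1}(P)$, where $P=P_1\cup P_2\cup\cdots\cup P_m$ is the union of the vertex sets of the $\pP_i$'s, since the convex hull of a finite union of polytopes equals the convex hull of the union of their vertices. Hence it suffices to run a convex hull algorithm on the $n=\sum_{i=1}^m n_i$ points of $P$ in the (even-dimensional) ambient space $\RR^{d+1}$; by Theorem~\ref{main} the resulting hull has combinatorial complexity $f=O(\sum_{1\le i\ne j\le m}n_in_j^{\lexp{d}})$. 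I would also record the elementary lower bound $\sum_{1\le i\ne j\le m}n_in_j^{\lexp{d}}=\Omega(n^{\lexp{d}})$, which holds because $m$ is a fixed constant: if $n_\lambda=\max_i n_i$, then $n_\lambda=\Theta(n)$, and for any $\mu\ne\lambda$ the term $n_\mu n_\lambda^{\lexp{d}}\ge n_\lambda^{\lexp{d}}=\Omega(n^{\lexp{d}})$. Consequently $n=O(f)$ and $\log f=O(\log n)$ always, and $f=\Omega(n^2)$ when $d\ge 5$.

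Next I would plug $f$ and these inequalities into the running times already worked out in the discussion preceding the statement. Chan's gift-wrapping variant \cite{c-osrch-96} is never the fastest for odd $d$, so it is discarded. For $d\ge 5$ (\ie $\alpha:=\lexp{d+1}\ge 3$), Seidel's shelling algorithm \cite{s-chdch-86} runs in $O(n^2+f\log n)$ time and the variant of Matou{\v{s}}ek and Schwarzkopf \cite{ms-loq-92} in $O(n^{2-2/(\alpha+1)+\epsilon}+f\log n)$ time; since $f=\Omega(n^2)$ and $2-2/(\alpha+1)+\epsilon<2$ for $\epsilon$ small, in both cases the non-output term is absorbed into $f\log n=O((\sum_{1\le i\ne j\le m}n_in_j^{\lexp{d}})\log n)$, which gives the claimed bound for $d\ge 5$. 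For $d=3$ (ambient dimension $4$, $\alpha=2$, $\lexp{d}=1$), the Matou{\v{s}}ek--Schwarzkopf algorithm runs in $O(n^{4/3+\epsilon}+(\sum_{1\le i\ne j\le m}n_in_j)\log n)$ time, whereas the divide-and-conquer algorithm of Chan, Snoeyink and Yap \cite{csy-pddpo-97} runs in $O((n+f)\log^2 f)=O((\sum_{1\le i\ne j\le m}n_in_j)\log^2 n)$ time, using $n=O(f)$ and $\log f=O(\log n)$. Running whichever of the two has the smaller running-time bound --- a choice that can be made a priori from $n$ and the $n_i$'s (or by interleaving the two executions) --- yields the stated minimum.

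I expect no deep obstacle here: the argument is essentially bookkeeping on top of the existing output-sensitive convex hull machinery. The only step that needs a little care is verifying that the preprocessing (non-output-sensitive) terms are dominated, which is exactly where the inequality $\sum_{1\le i\ne j\le m}n_in_j^{\lexp{d}}=\Omega(n^{\lexp{d}})$ is used: for $d\ge 5$ it gives $\Omega(n^2)$ and lets us absorb the $n^2$ (or $n^{2-2/(\alpha+1)+\epsilon}$) term into $f\log n$, but for $d=3$ it only gives $\Omega(n)$, which is why the $n^{4/3+\epsilon}$ term cannot be suppressed and both algorithms must be retained, with the better one chosen according to the values of the $n_i$'s.
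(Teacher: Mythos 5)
Your proposal is correct and takes essentially the same route as the paper: reduce to an output-sensitive convex hull computation on the union of the vertex sets, bound the output size via Theorem~\ref{main}, and then use Seidel's or Matou{\v{s}}ek--Schwarzkopf's algorithm for $d\ge{}5$ and the better of Matou{\v{s}}ek--Schwarzkopf and Chan--Snoeyink--Yap for $d=3$ (discarding Chan's algorithm), with your explicit absorption of the preprocessing terms via $\sum_{1\le{}i\ne{}j\le{}m}n_in_j^{\lexp{d}}=\Omega(n^{\lexp{d}})$ merely making explicit what the paper leaves implicit. The only small caveat is that your assertions ``$n=O(f)$'' and ``$f=\Omega(n^2)$'' should be read with $f$ denoting the worst-case bound $\sum_{1\le{}i\ne{}j\le{}m}n_in_j^{\lexp{d}}$ rather than the actual output size (which can be smaller than $n$ when intermediate polytopes contribute few hull vertices); since the claimed running times are stated in terms of that bound, this does not affect the argument.
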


% chs.tex

\section{Convex hulls of spheres with a fixed number of distinct
  radii}
\label{sec:chs}

In this section we derive tight upper and lower bounds on the worst-case
complexity of the convex hull of a set of spheres
in $\RR^d$ having $m$ distinct radii, where $m$ is considered to be fixed.

Let $\Sigma$ be a set of $n$ spheres $\sigma_k=(c_k,r_k)$,
$1\le{}k\le{}n$, in $\RR^d$, and let $CH_d(\Sigma)$ be the convex
hull of the spheres in $\Sigma$. A \emph{face of circularity $\ell$}
of $CH_d(\Sigma)$, $0\le\ell\le{}d-1$, is a maximal connected portion
of the boundary of $CH_d(\Sigma)$ consisting of points where the
supporting hyperplanes are tangent to a given set of $(d-\ell)$
spheres of $\Sigma$.
In the special case where all spheres have the same radius,
$CH_d(\Sigma)$ is combinatorially equivalent to the convex hull
$CH_d(K)$ of the centers $K$ of spheres in $\Sigma$, in the sense that
each face of circularity $\ell$ of $CH_d(\Sigma)$ corresponds to a
unique $(d-\ell-1)$-face of $CH_d(K)$, for $0\le\ell\le{}d-1$.

We consider here the case where the radii $r_k$ can take $m$ distinct
values, i.e., $r_k\in\{\rho_1,\rho_2,\ldots,\rho_m\}$.
Without loss of generality we may assume that
$0<\rho_1<\rho_2<\ldots<\rho_m$.
We identify $\RR^d$ with the hyperplane $H_0=\{x_{d+1}=0\}$ of
$\RR^{d+1}$ and we call the $(d+1)$-axis of $\RR^{d+1}$ the
\emph{vertical axis}, while the expression \emph{above} will refer
to the $(d+1)$-coordinate.
Let $\Pi_i$, $1\le{}i\le{}m$, be the hyperplane $\{x_{d+1}=\rho_i\}$
in $\RR^{d+1}$, and let $P$ be the point set in $\RR^{d+1}$ obtained
by mapping each sphere $\sigma_k$ to the point $p_k=(c_k,r_k)$. Let
$P_i$ denote the subset of $P$ containing points that belong to the
hyperplane $\Pi_i$, and let $n_i$ be the cardinality of $P_i$.
We denote by $\pP$ the convex hull of the points in $P$ (i.e.,
$\pP=CH_{d+1}(P)$). We further denote by $\pP_i$ the convex hull of
the points in $P_i$ (i.e., $\pP_i=CH_d(P_i)$); more precisely, we
identify $\Pi_i$ with $\RR^d$, and then define $\pP_i$ to be the
convex hull of the points in $P_i$, seen as points in $\RR^d$.
We use $\PPP$ to denote the set of the $\pP_i$'s.
Let $\hat{P}_i$ be the subset of $P_i$ that defines $\pP_i$ (i.e., the
points in $\hat{P}_i$ are the vertices of $\pP_i$ and thus
$\pP_i=CH_d(\hat{P}_i)$), and let $\hat{n}_i\le{}n_i$ be the
cardinality of $\hat{P}_i$. Finally, let
$\hat{P}=\bigcup_{i=1}^m\hat{P}_i$ and $\hat{\pP}=CH_{d+1}(\hat{P})$.
Notice that it is possible that $\pP\ne\hat{\pP}$; such a situation
will arise if $P_1\ne{}\hat{P_1}$ (resp., $P_m\ne{}\hat{P_m}$), in
which case the intersection of $\pP$ with $\Pi_1$ (resp., $\Pi_m$)
will consist of more than one $d$-face of $\pP$. On the other hand $\pP$
and $\hat{\pP}$ have the same interior.

Let $\lambda_0$ be the half lower hypercone in $\RR^{d+1}$ with
arbitrary apex, vertical axis, and angle at the apex equal to
$\frac{\pi}{4}$. By $\lambda(p)$ we denote the translated copy of
$\lambda_0$ with apex at $p$; observe that the intersection of
the hypercone $\lambda(p_k)$ with the hyperplane $H_0$ is identical
to the sphere $\sigma_k$. Let $\Lambda$ be the set of the lower
half hypercones $\{\lambda(p_1),\lambda(p_2),\ldots,\lambda(p_n)\}$ in
$\RR^{d+1}$ associated with the spheres of $\Sigma$. The
intersection of the convex hull $CH_{d+1}(\Lambda)$ with $H_0$ is
equal to $CH_d(\Sigma)$.

Let $O'$ be a point in the interior of $\pP$. We then have the following:

\begin{theorem}[{\cite[Theorem 1]{bcddy-acchs-96}}]\label{thm:support}
  Any hyperplane of $\RR^d$ supporting $CH_d(\Sigma)$ is the
  intersection with $H_0$ of a unique hyperplane $H$ of
  $\RR^{d+1}$ satisfying the following three properties:
  \begin{smallenum}
  \item[1.] $H$ supports $\pP$,
  \item[2.] $H$ is the translated copy of a hyperplane tangent to
    $\lambda_0$ along one of its generatrices,
  \item[3.] $H$ is above $O'$. 
  \end{smallenum}
  Conversely, let $H$ be a hyperplane of $\RR^{d+1}$ satisfying the
  above three properties. Its intersection with $H_0$ is a hyperplane
  of $\RR^d$ supporting $CH_d(\Sigma)$.
\end{theorem}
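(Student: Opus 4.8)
\emph{Proof strategy.} The statement is, in essence, the translation --- through the lifting $\sigma_k\mapsto p_k=(c_k,r_k)$ --- of the condition ``$\sigma_k$ lies in a given closed halfspace of $\RR^d$'' into the condition ``$p_k$ lies in a given closed halfspace of $\RR^{d+1}$ bounded by a $\frac{\pi}{4}$-hyperplane''. The first step is to describe the hyperplanes in condition~(2). Write points of $\RR^{d+1}$ as $(x,t)$ with $x\in\RR^d$ and $t$ the last coordinate, and place the apex of $\lambda_0$ at the origin, so that $\lambda_0=\{(x,t):t\le-|x|\}$ and its generatrices are the rays from the origin through the directions $(u,-1)$, $|u|=1$. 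A hyperplane tangent to $\lambda_0$ along one such generatrix has, after fixing the sign of its normal so that the vertical component is nonnegative, a normal of the form $(a,|a|)$ with $0\ne a\in\RR^d$; and conversely every such hyperplane arises this way. Hence condition~(2) is equivalent to saying that $H=\{(x,t):a\cdot x+|a|\,t=c\}$ for some $a\ne0$ and some constant $c$; such an $H$ makes an angle of $\frac{\pi}{4}$ with $H_0$, hence meets it transversally, so $h:=H\cap H_0=\{x\in\RR^d:a\cdot x=c\}$ is a hyperplane of $\RR^d$. We rescale so that $|a|=1$.

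The heart of the argument is a ``dictionary'' relating spheres to lifted points. Put $H^-=\{(x,t):a\cdot x+t\le c\}$ and $h^-=H^-\cap H_0=\{x:a\cdot x\le c\}$. A one-line computation, using $|a|=1$, shows that $\sup_{(x,t)\in\lambda(p)}(a\cdot x+t)=a\cdot c_p+r_p$, attained exactly at the apex $p=(c_p,r_p)$. Consequently, for each $k$,
\[
\lambda(p_k)\subseteq H^-\ \Longleftrightarrow\ p_k\in H^-\ \Longleftrightarrow\ a\cdot c_k+r_k\le c\ \Longleftrightarrow\ \sigma_k\subseteq h^-,
\]
while $p_k\in H$ holds exactly when $\sigma_k\subseteq h^-$ and $\sigma_k$ is tangent to $h$. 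Since $CH_d(\Sigma)=\conv(\bigcup_k\sigma_k)$ and $\pP=\conv\{p_1,\ldots,p_n\}$, and since a hyperplane supports a compact convex set precisely when the set lies in one of its closed halfspaces and the hyperplane meets the set, this yields:
\[
h\text{ supports }CH_d(\Sigma)\text{ with }CH_d(\Sigma)\subseteq h^-
\quad\Longleftrightarrow\quad
H\text{ supports }\pP\text{ with }\pP\subseteq H^-.
\]

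With the dictionary in place the theorem is immediate. If $h$ supports $CH_d(\Sigma)$, orient it so that $CH_d(\Sigma)\subseteq h^-$ and take $H$ as above: then $H$ supports $\pP$, which is~(1); $H$ has the form required by~(2); and, since $O'\in\mathrm{int}\,\pP$ and $\pP\subseteq H^-$, we have $O'\in\mathrm{int}\,H^-=\{(x,t):a\cdot x+t<c\}$, i.e.\ the point of $H$ on the vertical line through $O'$ lies above $O'$, which is~(3). Conversely, suppose $H$ satisfies (1)--(3). By~(2), $H=\{a\cdot x+|a|t=c\}$ with $a\ne0$; rescale to $|a|=1$ and set $h=H\cap H_0$. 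By~(1), $\pP$ lies in one of the two closed halfspaces of $H$; by~(3), $O'\in\mathrm{int}\,H^-$, and since $O'$ is interior to $\pP$ this forces $\pP\subseteq H^-$. The dictionary then gives that $h$ supports $CH_d(\Sigma)$. Uniqueness is equally short: a hyperplane $H$ obeying~(2) with $H\cap H_0=h$ must have normal $(a,|a|)$ with $\{a\cdot x=c\}=h$, which determines $(a,c)$ up to a common positive scalar; requiring, as~(2) does, that $H$ be a translate of a tangent hyperplane of the \emph{lower} cone $\lambda_0$ fixes the sign of the vertical component of the normal, so $H$ is determined by $h$ alone.

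The substantive points, in my view, are (a) identifying the translates of the generatrix-tangent hyperplanes of $\lambda_0$ with exactly the $\frac{\pi}{4}$-hyperplanes whose normal points upward, together with the supremum computation over $\lambda(p)$ that powers the dictionary, and (b) the bookkeeping of orientations --- keeping track of which closed halfspace of $h$ (\resp $H$) is the relevant one. Condition~(3) is precisely what resolves~(b): without it one would also admit hyperplanes $H$ that support $\pP$ \emph{from below}, and such hyperplanes do not in general restrict to supporting hyperplanes of $CH_d(\Sigma)$, because a downward cone $\lambda(p_k)$ is never contained in an ``upper'' halfspace. Finally, the hypothesis $O'\in\mathrm{int}\,\pP$ forces $\pP$ to be full-dimensional, so no supporting hyperplane of $\pP$ can contain $\pP$ and the case distinctions above are exhaustive.
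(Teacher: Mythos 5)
Two preliminary remarks. First, the paper itself does not prove this statement: it is imported verbatim as \cite[Theorem 1]{bcddy-acchs-96}, so there is no in-paper proof to compare against and your argument has to stand on its own. Second, most of it does stand: the identification of the hyperplanes in condition (2) with the hyperplanes of the form $\{a\cdot x+|a|\,t=c\}$, the supremum computation over $\lambda(p)$, the resulting dictionary between ``$\sigma_k\subseteq h^-$'' and ``$p_k\in H^-$'', and the existence and converse directions are correct and are essentially the standard lifting argument. (A harmless slip: the supremum of $a\cdot x+t$ over $\lambda(p)$ is attained along the entire generatrix of $\lambda(p)$ in direction $(a,-1)$, not only at the apex.)

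The uniqueness step, however, is wrong as written. You claim that condition (2) together with $H\cap H_0=h$ already determines $H$, because being a translate of a tangent hyperplane of the \emph{lower} cone fixes the sign of the vertical component of the normal. It does fix that sign, but it does not fix the horizontal part: for $h=\{a\cdot x=c\}$ with $|a|=1$, \emph{both} $H_1=\{a\cdot x+t=c\}$ and $H_2=\{-a\cdot x+t=-c\}$ are translates of hyperplanes tangent to $\lambda_0$ along a generatrix (along the generatrices in directions $(a,-1)$ and $(-a,-1)$, respectively), and both meet $H_0$ exactly in $h$; they are the two hyperplanes through $h$ making angle $\frac{\pi}{4}$ with $H_0$. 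So (2) leaves two candidates, and uniqueness must use (1) and (3). The repair is available from your own machinery: if $H$ satisfies (1)--(3) and $H\cap H_0=h$, then, exactly as in your converse argument, $\pP\subseteq H^-$, and the dictionary gives $CH_d(\Sigma)\subseteq\{u\cdot x\le c'\}$, where $(u,1)$ is the rescaled normal of $H$ and $\{u\cdot x=c'\}=h$, i.e.\ $(u,c')=\pm(a,c)$. If both signs were possible, every sphere would satisfy $a\cdot c_k+r_k\le c$ and $a\cdot c_k-r_k\ge c$ simultaneously, forcing $r_k=0$ and $a\cdot c_k=c$ for all $k$; then all lifted points would lie in the $(d-1)$-flat $\{a\cdot x=c\}\cap H_0$, contradicting the hypothesis (which you yourself invoke at the end) that $\pP$ has an interior point $O'$. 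Hence only one orientation survives, $H=H_1$, and uniqueness holds --- but through (1) and (3), not through (2) alone.
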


Theorem \ref{thm:support} implies an injection $\varphi:CH_d(\Sigma)\to{}\pP$
that maps each face of circularity $(d-\ell-1)$ of $CH_d(\Sigma)$ to a
unique $\ell$-face of $\pP$, for $0\le\ell\le{}d-1$. 
Theorem \ref{thm:support} also implies that points in
$P_i\sm\hat{P}_i$, $1\le{}i\le{}m$, can never be points on a
supporting hyperplane $H$ of $\pP$ satisfying the three properties of
the theorem. Therefore, $\varphi$ is, in fact, an injection that maps
each face of circularity $(d-\ell-1)$ of $CH_d(\Sigma)$ to a unique
$\ell$-face of $\hat{\pP}$, $0\le\ell\le{}d-1$. Observe that
$\hat{\pP}$ is the convex hull of the set $\PPP$ of $m$ convex
polytopes lying on $m$ parallel hyperplanes of $\RR^{d+1}$. By
employing Theorem \ref{main} of Section \ref{sec:chp},
we deduce that $\hat{\pP}$'s complexity is
$O(\sum_{1\le{}i\ne{}j\le{}m}\hat{n}_i\hat{n}_j^{\lexp{d}})=
O(\sum_{1\le{}i\ne{}j\le{}m}n_in_j^{\lexp{d}})$, which, via the
injection $\varphi:CH_d(\Sigma)\to{}\hat{\pP}$, is also an upper bound
for the worst-case complexity of $CH_d(\Sigma)$.

\subsection{Lower bound construction with two distinct radii}
\label{subsec:lb2}

\newcommand{\tmc}[1]{\gamma_{#1}^{\textsl{\textsf{tr}}}(t)}

For any even dimension $2\delta$, the trigonometric moment curve
$\tmc{2\delta}$ in $\RR^{2\delta}$ is the curve:
\[\tmc{2\delta}=(\cos{}t,\sin{}t,\cos{}2t,\sin{}2t,\ldots,
\cos\delta{}t,\sin\delta{}t),\quad t\in[0,\pi).\]
Notice that points on $\tmc{2\delta}$ are points on the sphere of
$\RR^{2\delta}$ centered at the origin with radius equal to
$\sqrt{\delta}$. For any set $P$ of $n$ points on $\tmc{2\delta}$, the
convex hull $CH_{2\delta}(P)$ is a polytope $\qQ$
combinatorially equivalent to the cyclic polytope $C_{2\delta}(n)$
(cf. \cite{g-cp-03,z-lp-95}). Therefore,
$f_{2\delta-1}(\qQ)=\Theta(n^\delta)$.

\newcommand{\prism}{\Delta}

Suppose now that the ambient space is $\RR^d$, where $d\ge{}3$ is odd.
Let $H_1$ be $H_2$ be the hyperplanes $\{x_d=z_1\}$ and $\{x_d=z_2\}$,
where $z_1,z_2\in\mathbb{R}$ and $z_2>z_1+2(n_2+2)\sqrt{\delta}$; the
quantity $n_2$ will be defined below.
Consider a set $\Sigma_1$ of $n_1+1$ points, treated as spheres of
$\RR^d$ of zero radius, on the $(d-1)$-dimensional trigonometric
moment curve $\tmc{d-1}$ embedded in $H_1$ (please refer to
Fig. \ref{fig:lb}(left), as well as Fig. \ref{fig:lbtop1} for the view
of the construction from the positive $x_d$-axis).
Among the $n_1+1$ points, the first $n_1$ points are chosen with
$t\in(0,\frac{\pi}{2})$, whereas for the remaining point we require
that $t\in(\frac{\pi}{2},\pi)$. This implies that the
$x_1$-coordinate of the first $n_1$ points of $\Sigma_1$ is 
positive, whereas the $x_1$-coordinate of the last point of $\Sigma_1$
is negative. Let $\Sigma_2$ be the projection, along the
$x_d$-axis, of $\Sigma_1$ on the hyperplane $H_2$.
Clearly, the $n_1+1$ points of $\Sigma_2$ in $H_2$ lie on the
$(d-1)$-dimensional trigonometric moment curve $\tmc{d-1}$ embedded in
$H_2$.
The points of $\Sigma_i$, $i=1,2$, lie on a $(d-2)$-dimensional
sphere of $\RR^d$, centered at the point $(0,0,\ldots,0,z_i)$,
with radius $\sqrt{\delta}$. Moreover, the number of facets of the
polytope $\qQ_i=CH_{d-1}(\Sigma_i)$ is
$\Theta(n_1^{\lexp{d-1}})=\Theta(n_1^{\lexp{d}})$.
The convex hull of the $2(n_1+1)$ points of
$\Sigma_1\cup{}\Sigma_2$ is a prism $\prism$. $\prism$ consists of
$\Theta(n_1^{\lexp{d}})$ facets not lying on $H_1$ or $H_2$,
called the \emph{vertical facets} of the prism, the $(d-1)$-face of
$CH_{d-1}(\Sigma_1)$, called the \emph{bottom facet}, and the
$(d-1)$-face of $CH_{d-1}(\Sigma_2)$, called the \emph{top facet}. 
For each vertical facet $F$ of $\prism$, we denote by $\vec{\nu}_F$
the unit normal vector of $F$ pointing outside $\prism$, and by $F^+$
(\resp $F^-$) the positive (\resp negative) open halfspace delimited
by the supporting hyperplane of $F$. Regarding the ridges of $\prism$,
those that are intersections of vertical facets of $\prism$ will be
referred to as \emph{vertical ridges}. Notice that the vertical ridges
of $\prism$ are perpendicular to $H_1$ and $H_2$.

Let $Y$ be the oriented hyperplane $\{x_1=0\}$ with unit normal
vector $\vec{\nu}=(1,0,\ldots,0)$.
Let also $Y^+$ and $Y^-$ be the positive and negative open
halfspaces of $\RR^d$ delimited by $Y$, respectively.
$Y$ contains the $x_d$-axis, and is perpendicular to the hyperplanes
$H_1$ and $H_2$. Recall that $n_1$ points of $\Sigma_i$ are
contained in $Y^+$, whereas exactly one point of $\Sigma_i$ is
contained in $Y^-$. Clearly, $Y$ is in general position with respect
to $\prism$, $\qQ_1$ and $\qQ_2$. Let $\tilde{\qQ}_i$ be the intersection
of $\qQ_i$ with $Y$, and let $\fF_i$ be the set of faces of $\qQ_i$
intersected by $Y$. $\tilde{\qQ}_i$ is a $(d-2)$-polytope, and its
number of vertices is at most $n_1$, since $Y$ cuts at most $n_1$
edges of $\qQ_i$. This implies that the complexity of
$\tilde{\qQ}_i$ is $O(n_1^{\lexp{d-2}})=O(n_1^{\lexp{d}-1})$; the same
bound holds for $\fF_i$. Since there are
no facets of $\prism$ in $Y^-$ ($Y^-$ contains a single vertex of
$\qQ_i$), and since the number of facets of $\qQ_i$ is
$\Theta(n_1^{\lexp{d}})$, we conclude that the number of facets of
$\qQ_i$ contained in $Y^+$ is also $\Theta(n_1^{\lexp{d}})$; the same
bound holds for the number of vertical facets of $\prism$ in $Y^+$.

Define now a set
$\Sigma_3=\{\sigma_0,\sigma_1,\ldots,\sigma_{n_2+1}\}$ of $n_2+2$
spheres in $\RR^d$, where $\sigma_k=(c_k,\rho)$, and
$c_k=(0,\ldots,0,(2k+1)\sqrt{\delta})$,
$0\le{}k\le{}n_2+1$. In other words, the sphere $\sigma_k$ is
centered on the $x_d$-axis, with the $d$-th coordinate of its center $c_k$
being $(2k+1)\sqrt{\delta}$, while its radius is $\rho$. 
We choose $\rho$ to be smaller than $\sqrt{\delta}$, but large enough
so that each sphere $\sigma_i$ satisfies the following two properties:
\begin{smallenum}
\item[(1)] it does not intersect any of the ridges of $\prism$
  (including the vertical ridges of $\prism$), and
\item[(2)] it intersects the interior of all vertical facets of
  $\prism$.
\end{smallenum}
Notice also that for this choice for $\rho$, none of the spheres
in $\Sigma_3$ intersects the hyperplanes $H_1$ and $H_2$ (recall that
$z_2>z_1+2(n_2+2)\sqrt{\delta}$), while the spheres in $\Sigma_3$
are pairwise disjoint; these two observations, however, are not
critical for our construction.

We are now going to perturb the centers of the spheres in
$\Sigma_3$ to get a new set of spheres $\Sigma_3'$ (see
Fig. \ref{fig:lb}(right), as well as Fig. \ref{fig:lbtop2} for the
view of the construction from the positive $x_d$-axis).
Define $\sigma_k'$ to be the sphere with
radius $\rho$ and center
$c_k'=c_k+(\sum_{\ell=0}^k\frac{\varepsilon}{2^\ell})\vec{\nu} 
=c_k+\varepsilon(2-\frac{1}{2^k})\vec{\nu}$, where
$0<\varepsilon\ll{}1$.
The quantity $\varepsilon$ is chosen so that the spheres in
$\Sigma_3'$ satisfy almost the same conditions as the spheres
in $\Sigma_3$.
In particular, we require that condition (1) is still satisfied, while
we relax the requirement on condition (2): we now require that
$\sigma_k'$ intersects the interior of all vertical facets of $\prism$
contained in $Y^+$. In addition to the two conditions
above, we also require that for each $k$, $0\le{}k\le{}n_2+1$, the
$(d-2)$-dimensional sphere $\sigma_k\cap\sigma_k'$ is contained in
$F^-$ for all vertical facets $F$ of $\prism$ in $Y^+$.

\newcommand{\labelsize}{\footnotesize}

\begin{figure}[pt]
  \begin{center}
    \psfrag{Y}[][]{\textcolor{MyGreen}{\labelsize$Y$}}
    \psfrag{x1}[][]{\labelsize$x_1$}
    \psfrag{xd}[][]{\labelsize$x_d$}
    \psfrag{nu}[][]{\textcolor{MyGreen}{\labelsize$\vec{\nu}$}}
    \psfrag{nuF}[][]{\textcolor{MyBlue}{\labelsize$\vec{\nu}_F$}}
    \psfrag{F}[][]{\textcolor{MyBlue}{\labelsize$F$}}
    \includegraphics[width=0.9\textwidth]{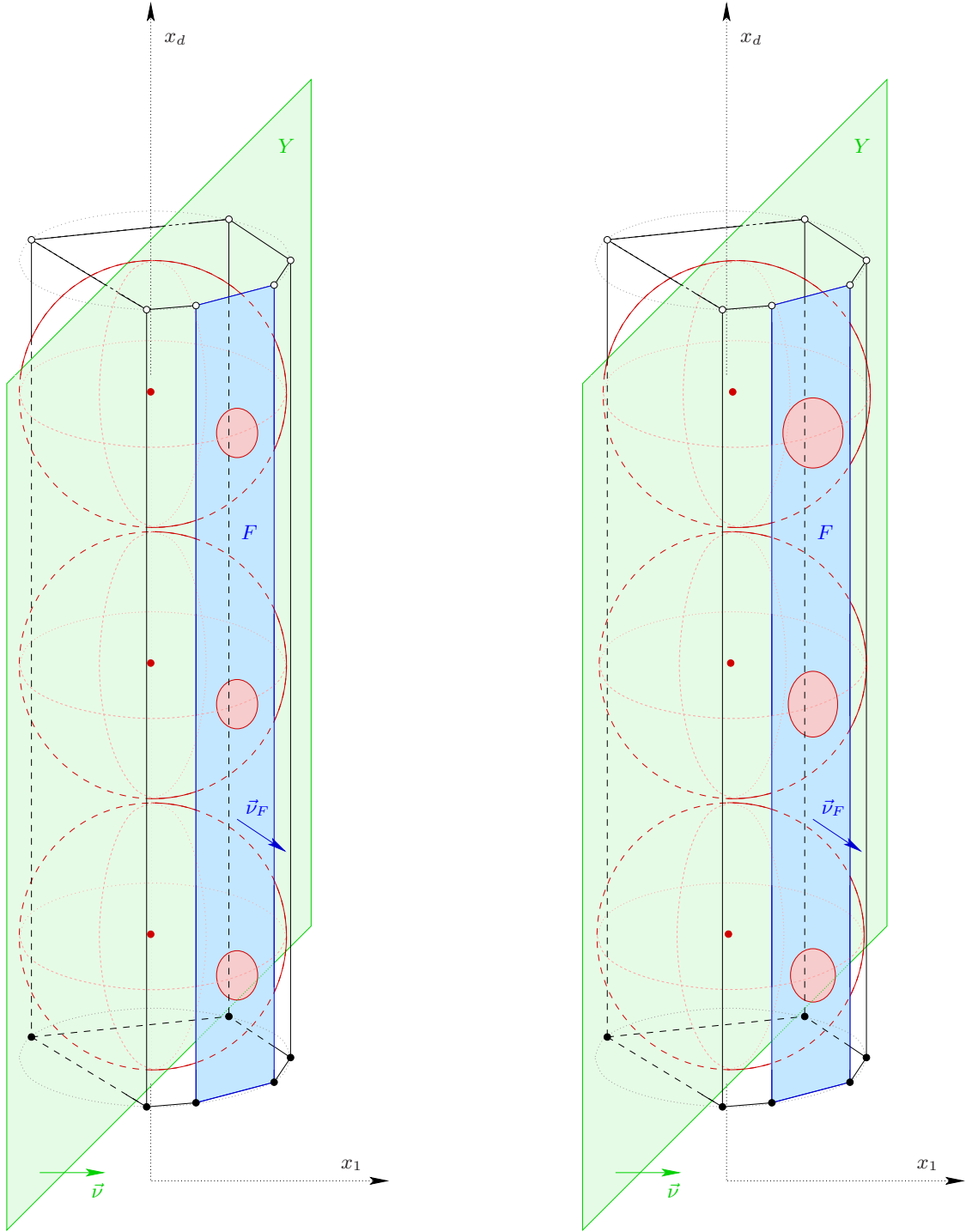}
  \end{center}
  \caption{The lower bound construction in the case of two radii. The
    points in $\Sigma_1$ (\resp $\Sigma_2$) are shown in black (\resp
    white). The hyperplane $Y$ is shown in green, while the prism
    $\prism$ is shown in black. The facet $F$ in blue is one of
    the vertical facets of $\prism$ in $Y^+$. The sphere sets
    $\Sigma_3$ (left) and $\Sigma_3'$ (right) are shown in red.
    The red spherical caps on the left correspond to a unique
    supporting hyperplane of
    $CH_d(\Sigma_1\cup\Sigma_2\cup\Sigma_3)$.
    The red spherical caps on the right correspond to faces of
    $CH_d(\Sigma_1\cup\Sigma_2\cup\Sigma_3')$ of circularity $(d-1)$.}
  \label{fig:lb}
\end{figure}

\renewcommand{\labelsize}{}

\begin{figure}[pt]
  \begin{center}
    \psfrag{Y}[][]{\textcolor{MyGreen}{\labelsize$Y$}}
    \psfrag{x1}[][]{\labelsize$x_1$}
    \psfrag{xd}[][]{\labelsize$x_d$}
    \psfrag{nu}[][]{\textcolor{MyGreen}{\labelsize$\vec{\nu}$}}
    \psfrag{nuF}[][]{\textcolor{MyBlue}{\labelsize$\vec{\nu}_F$}}
    \psfrag{F}[][]{\textcolor{MyBlue}{\labelsize$F$}}
    \includegraphics[width=\textwidth]{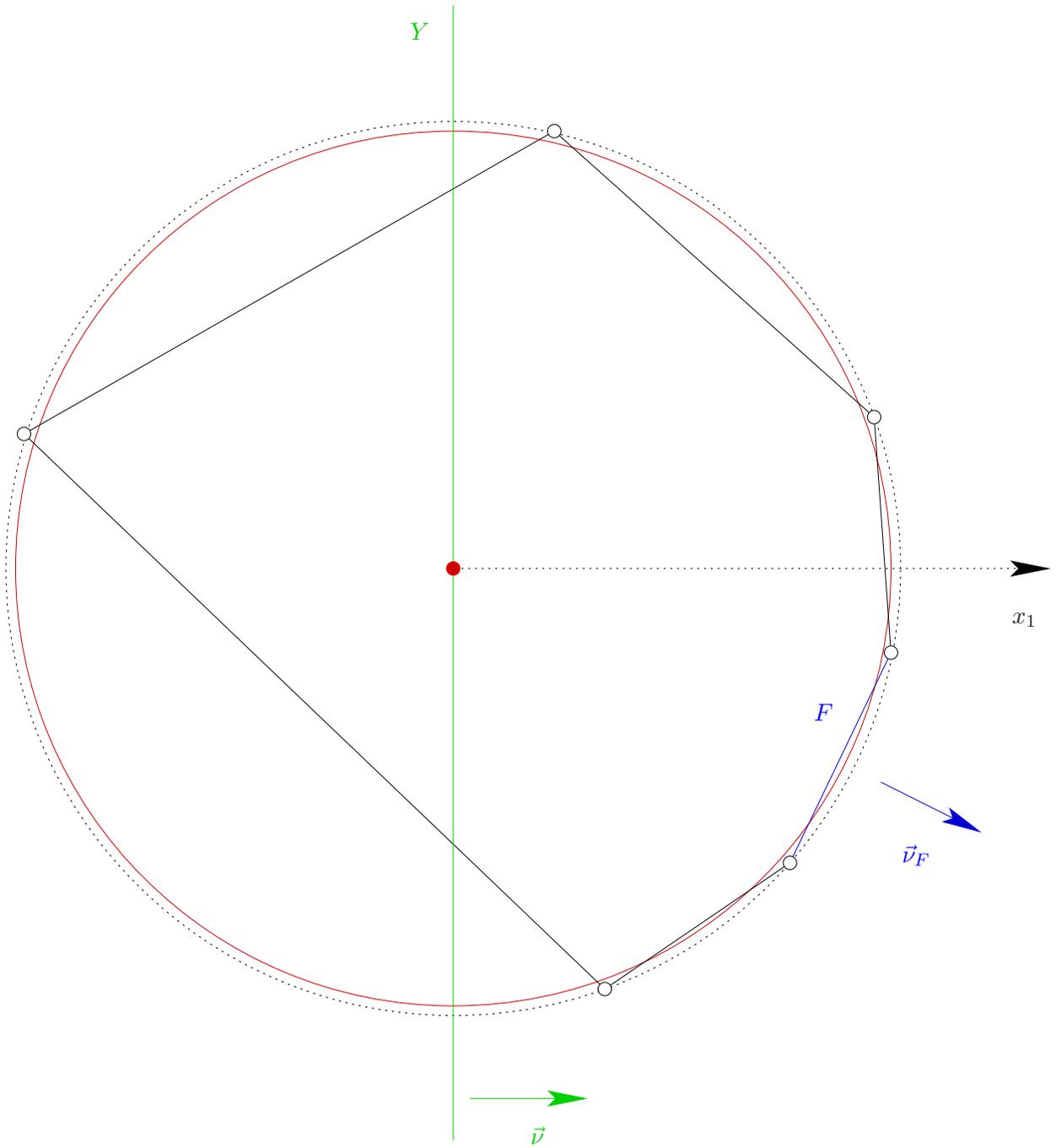}
  \end{center}
  \caption{View from the positive $x_d$-axis of the construction in 
    Fig. \ref{fig:lb}(left). The silhouettes of all spheres in
    $\Sigma_3$ coincide.}
  \label{fig:lbtop1}
\end{figure}

\begin{figure}[pt]
  \begin{center}
    \psfrag{Y}[][]{\textcolor{MyGreen}{\labelsize$Y$}}
    \psfrag{x1}[][]{\labelsize$x_1$}
    \psfrag{xd}[][]{\labelsize$x_d$}
    \psfrag{nu}[][]{\textcolor{MyGreen}{\labelsize$\vec{\nu}$}}
    \psfrag{nuF}[][]{\textcolor{MyBlue}{\labelsize$\vec{\nu}_F$}}
    \psfrag{F}[][]{\textcolor{MyBlue}{\labelsize$F$}}
    \includegraphics[width=\textwidth]{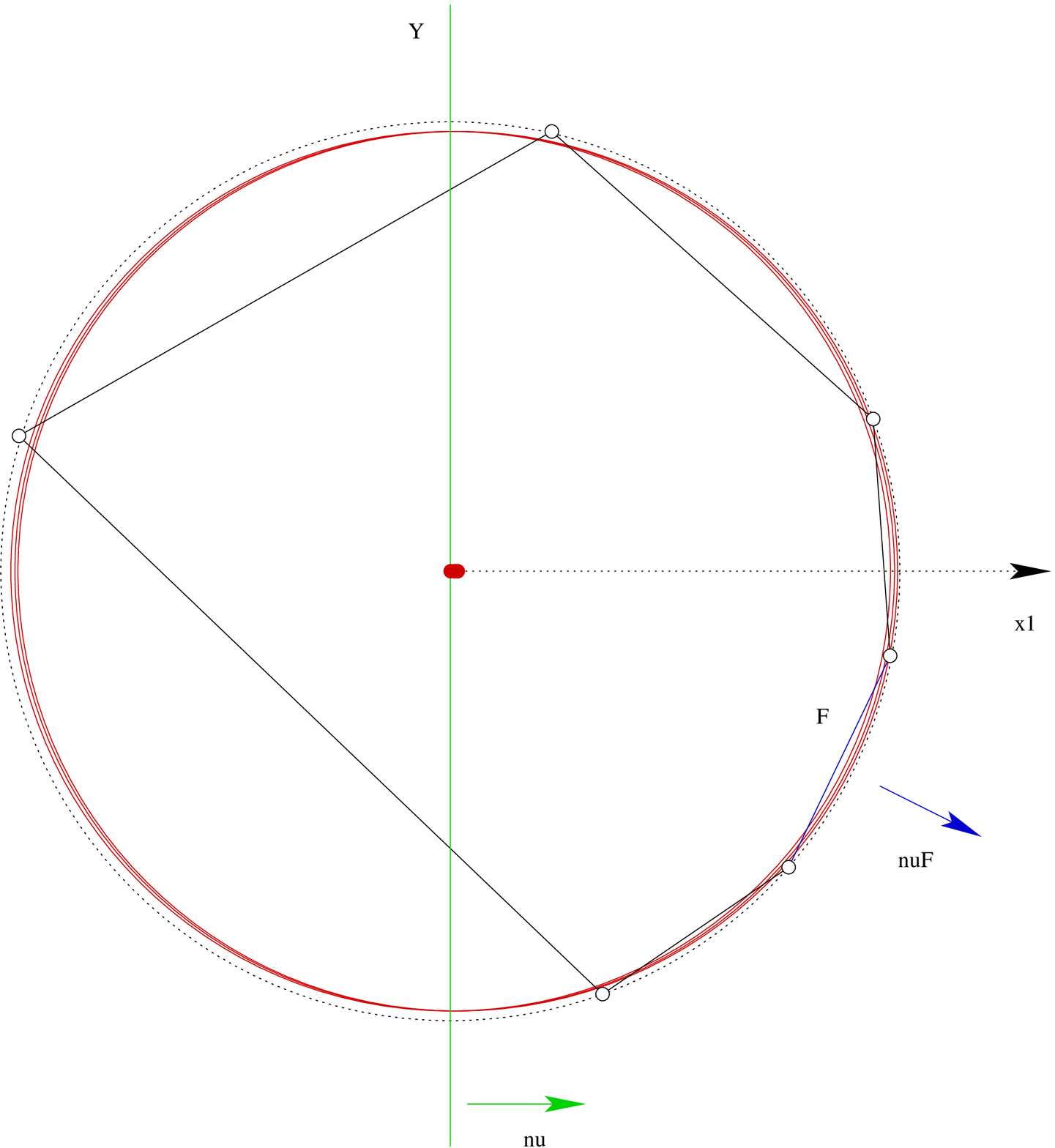}
  \end{center}
  \caption{View from the positive $x_d$-axis of the construction in 
    Fig. \ref{fig:lb}(right).}
  \label{fig:lbtop2}
\end{figure}

We will now show that for each pair $(\sigma_k',F)$, where
$1\le{}k\le{}n_2$ and $F$ is a vertical facet of $\prism$ in $Y^+$,
the spherical cap $F^+\cap{}\sigma_k'$ induces a face of circularity
$(d-1)$ in $CH_d(\Sigma)$.
Let $F_1$ and $F_2$ be the ridges of $\prism$
on the boundary of $F$ contained in the top and bottom facet,
respectively. 
Finally, let $S_k$ be the supporting hyperplane of $\sigma_k$ parallel
to $F$; we consider $S_k$ to be oriented as $F$ (\ie the unit
normal vector of $S_k$ is $\vec{\nu}_F$), and thus $\sigma_k$ lies in
the closure of the negative halfspace delimited by $S_k$. Notice that
$S_k$ is also a supporting hyperplane for $\Sigma_3$. Let $S_k'$ be
the hyperplane we get by translating $S_k$ by the vector
$\varepsilon(2-\frac{1}{2^k})\vec{\nu}$. $S_k'$ supports $\sigma_k'$,
but fails to be a supporting hyperplane for $\Sigma_3'$. More
precisely, $S_k'$ intersects all spheres $\sigma_j'$ with $j>k$,
whereas all spheres $\sigma_j'$ with $j<k$, are contained in the
negative open halfspace delimited by $S_k'$.
We can, however, perturb $S_k'$ so that it supports $\Sigma_3'$:
simply slide $S_k'$ on sphere $\sigma_k'$ towards $F_1$,
while maintaining the property that it remains parallel to $F_1$ and
$F_2$. We keep sliding $S_k'$ until it has empty intersection with any
sphere $\sigma_j'$ with $j>k$. Notice that due to the way we have
perturbed the centers of the spheres in $\Sigma_3$ to get $\Sigma_3'$,
the new hyperplane $S_k''$ we get via this transformation is a
supporting hyperplane for $\Sigma_3'$. In fact, $S_k''$
is a supporting hyperplane for the sphere set
$\Sigma=\Sigma_1\cup\Sigma_2\cup\Sigma_3'$ (it touches
$CH_d(\Sigma)$ at $\sigma_k'$ only), which implies that $S_k''$
corresponds to a unique face of circularity $(d-1)$ on
$CH_d(\Sigma)$.

The same construction can be done for all $k$ with $1\le{}k\le{}n_2$,
and for all vertical facets of $\prism$ in $Y^+$. Since we have
$\Theta(n_1^{\lexp{d}})$ vertical facets of $\prism$ in $Y^+$, we
can construct $n_2\Theta(n_1^{\lexp{d}})$ distinct supporting hyperplanes
of $CH_d(\Sigma)$, corresponding to distinct faces of circularity
$(d-1)$ on $CH_d(\Sigma)$. Hence the complexity of
$CH_d(\Sigma)$ is $\Omega(n_2n_1^{\lexp{d}})$.
Without loss of generality, we may assume that $n_2\le{}n_1$, in which
case we have
$n_2n_1^{\lexp{d}}\ge{}\frac{1}{2}(n_2n_1^{\lexp{d}}+n_1n_2^{\lexp{d}})$.
Hence, we arrive at the following:

\begin{theorem}\label{thm:chs-lb2}
  Fix some odd $d\ge{}3$. There exists a set $\Sigma$ of
  spheres in $\RR^d$, consisting of $n_i$ spheres of
  radius $\rho_i$, $i=1,2$, with $\rho_1\ne\rho_2$, such that
  the complexity of the convex hull $CH_d(\Sigma)$ is
  $\Omega(n_1n_2^{\lexp{d}}+n_2n_1^{\lexp{d}})$.
\end{theorem}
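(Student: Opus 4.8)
The plan is to assemble the final sphere set from the pieces constructed above and then do the bookkeeping. Take $\Sigma=\Sigma_1\cup\Sigma_2\cup\Sigma_3'$: regard the $2(n_1+1)$ points of $\Sigma_1\cup\Sigma_2$ as spheres of a common radius $\rho_1$ (the construction literally uses $\rho_1=0$; replacing $0$ by a sufficiently small positive value preserves all the geometric conditions invoked below), and let the $n_2+2$ spheres of $\Sigma_3'$ have radius $\rho_2=\rho\ne\rho_1$. Then $\Sigma$ has $\Theta(n_1)$ spheres of radius $\rho_1$ and $\Theta(n_2)$ spheres of radius $\rho_2$ — exactly $n_1$ and $n_2$ after an $O(1)$ renaming or padding by spheres placed in general position — so, as the $\Omega(\cdot)$ bound absorbs constant factors, it suffices to exhibit $\Omega(n_2 n_1^{\lexp{d}})$ faces of $CH_d(\Sigma)$ and then run the easy symmetry step that upgrades this to the two-term bound.

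For the count I would invoke the analysis carried out before the statement. It shows that $\prism$ has $\Theta(n_1^{\lexp{d}})$ vertical facets in $Y^+$, and that for each such facet $F$ and each $k$ with $1\le k\le n_2$ the slid hyperplane $S_k''$ supports $\Sigma$, touches $CH_d(\Sigma)$ along $\sigma_k'$ only, and hence cuts out a face of circularity $(d-1)$ of $CH_d(\Sigma)$ whose supporting spherical cap is $F^+\cap\sigma_k'$. The perturbation from $\Sigma_3$ to $\Sigma_3'$, together with the requirements that each $\sigma_k'$ meet the interior of every vertical facet of $\prism$ in $Y^+$ while missing all ridges of $\prism$, and that $\sigma_k\cap\sigma_k'\subset F^-$ for each such $F$, is designed precisely so that these faces are pairwise distinct: for fixed $k$ the caps $F^+\cap\sigma_k'$ and ${F'}^+\cap\sigma_k'$ attached to distinct vertical facets $F\ne F'$ lie on different connected components of the portion of $\partial CH_d(\Sigma)$ supported by $\sigma_k'$ (they are separated along the vertical ridges of $\prism$, which $\sigma_k'$ avoids), while for $k\ne k'$ they lie on different spheres.

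Consequently $CH_d(\Sigma)$ has at least $n_2\cdot\Theta(n_1^{\lexp{d}})=\Omega(n_2 n_1^{\lexp{d}})$ faces of circularity $(d-1)$, so its complexity is $\Omega(n_2 n_1^{\lexp{d}})$. Finally, after relabelling the two radii if necessary we may assume $n_2\le n_1$; since $\lexp{d}\ge 1$ this gives $n_2 n_1^{\lexp{d}}\ge n_1 n_2^{\lexp{d}}$, hence $n_2 n_1^{\lexp{d}}\ge\frac{1}{2}(n_1 n_2^{\lexp{d}}+n_2 n_1^{\lexp{d}})$, and the claimed bound $\Omega(n_1 n_2^{\lexp{d}}+n_2 n_1^{\lexp{d}})$ follows.

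The real work — which I would treat as already done in the construction above rather than repeat — is the simultaneous, consistent choice of $z_1,z_2,\rho,\varepsilon$: first fix $\rho<\sqrt{\delta}$ large enough that every $\sigma_k$ misses all ridges of $\prism$ yet meets the interior of every vertical facet; then take $z_2-z_1$ large enough to accommodate the $n_2+2$ stacked spheres; finally choose $\varepsilon\ll 1$ so that $\Sigma_3'$ still misses the ridges of $\prism$, still meets the interior of every vertical facet of $\prism$ in $Y^+$, and has each circle $\sigma_k\cap\sigma_k'$ inside $F^-$ for every vertical facet $F$ of $\prism$ in $Y^+$ — this last condition being exactly what makes $S_k''$ support all of $\Sigma$ and not just $\Sigma_3'$. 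Verifying that $S_k''$ separates no point of $\Sigma_1\cup\Sigma_2$ from the spheres of $\Sigma_3'$, and that the ``one connected cap per vertical facet in $Y^+$'' picture is correct, is the only place where I expect genuine care to be needed.
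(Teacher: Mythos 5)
Your proposal is correct and takes essentially the same route as the paper: the paper's proof of this theorem is precisely the construction you cite, concluded by the same count of $n_2\cdot\Theta(n_1^{\lfloor d/2\rfloor})$ distinct faces of circularity $(d-1)$ (one per pair of a sphere $\sigma_k'$ and a vertical facet of the prism in $Y^+$, via the slid supporting hyperplane $S_k''$) and the same ``assume $n_2\le n_1$'' relabelling that yields the two-term bound. Your two added remarks---replacing the zero-radius points by sufficiently small spheres, and the connected-component/separate-spheres justification that the $(k,F)$ faces are pairwise distinct---are consistent with, and slightly more explicit than, what the paper itself records.
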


\subsection{Lower bound construction with \texorpdfstring{$m$}{m}
  distinct radii}
\label{subsec:lbmult}

We can easily generalize the lower bound construction of the previous
subsection in the case where we have $n_i$ spheres of radius
$\rho_i$, $1\le{}i\le{}m$, $m\ge{}3$, and the radii $\rho_i$ are
considered to be mutually distinct.

As in the previous subsection, the ambient space is $\RR^d$, where
$d\ge{}3$ is odd. Let $N_1=\sum_{i=2}^mn_i$ and $N_2=n_1$. We construct
the set $\Sigma=\Sigma_1\cup\Sigma_2\cup\Sigma_3'$ as in the previous
subsection where $\Sigma_1$ and $\Sigma_2$ contain each $N_1+1$ points
and $\Sigma_3'$ contains $N_2+2$ spheres of some appropriate
radius $\rho$ (recall that in the construction of the previous
subsection $\rho\approx\sqrt{\frac{d-1}{2}}\ge{}1$).
We then replace $n_i$ among the $N_1$ points of $\Sigma_1$ (\resp
$\Sigma_2$) contained in $Y^+$ by spheres with the same center
and radius equal to $r^i$, where $0<r\ll{}1$. Furthermore, we
replace the unique point of $\Sigma_1$ (\resp $\Sigma_2$) in $Y^-$ by
a sphere of the same center and radius $r^2$.
We choose $r$ small enough so that the following two conditions
hold:
\begin{smallenum}
\item[(1)] the prism $\prism_r=CH_d(\Sigma_1\cup\Sigma_2)$ is
  combinatorially equivalent\footnote{Combinatorial equivalence here
    means that each face of circularity $\ell$ of $\prism_r$
    corresponds to a unique $(d-\ell-1)$-face of $\prism_0$.}
  to the prism $\prism_0$ (this is the
  prism we get for $r=0$, which is the prism $\prism$ of the
  previous subsection), and
\item[(2)] the two requirements for the spheres in $\Sigma_3'$
  are still satisfied: each $\sigma_k'$ does not intersect
  any of the ridges of $\prism_r$, while each $\sigma_k'$ intersects
  the interior of all vertical facets of $\prism_r$ contained in
  $Y^+$.
\end{smallenum}
As described in the previous subsection, the convex hull
$CH_d(\Sigma)$ of the set $\Sigma=\Sigma_1\cup\Sigma_2\cup\Sigma_3'$
of $2(N_1+1)+N_2+2$ spheres has $N_2\Theta(N_1^{\lexp{d}})$ faces
of circularity $(d-1)$, and hence its complexity is
$\Omega(N_2N_1^{\lexp{d}})=\Omega(n_1(\sum_{i=2}^mn_i)^{\lexp{d}})$.
Without loss of generality we may assume that
$n_2\ge{}n_1\ge{}n_i$ for all $3\le{}i\le{}m$, in which case we have:
$n_1(\sum_{i=2}^mn_i)^{\lexp{d}}\ge{}n_1n_2^{\lexp{d}}\ge{}
\frac{1}{m(m-1)}(\sum_{1\le{}i\ne{}j\le{}m}n_in_j^{\lexp{d}})$.
Since $m$ is fixed, we conclude that the complexity of
$CH_d(\Sigma)$ is $\Omega(\sum_{1\le{}i\ne{}j\le{}m}n_in_j^{\lexp{d}})$.

\begin{theorem}\label{thm:chs-lbm}
  Fix some odd $d\ge{}3$. There exists a set $\Sigma$ of
  spheres in $\RR^d$, consisting of $n_i$ spheres of radius
  $\rho_i$, $1\le{}i\le{}m$, with $\rho_1<\rho_2<\ldots<\rho_m$ and
  $m\ge{}3$ fixed, such that the complexity of the convex hull
  $CH_d(\Sigma)$ is
  $\Omega(\sum_{1\le{}i\ne{}j\le{}m}n_in_j^{\lexp{d}})$. 
\end{theorem}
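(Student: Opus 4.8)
The plan is to leverage the two-radii lower bound of Theorem~\ref{thm:chs-lb2} as a black box and superimpose the remaining $m-2$ radii in a way that neither destroys the faces of circularity already counted nor requires any new combinatorial argument. Concretely, set $N_1=\sum_{i=2}^m n_i$ and $N_2=n_1$, and run the construction of Subsection~\ref{subsec:lb2} with these parameters: two trigonometric moment curve point sets $\Sigma_1,\Sigma_2$ of $N_1+1$ points each, spanning a prism $\prism$ with $\Theta(N_1^{\lexp{d}})$ vertical facets in $Y^+$, and a perturbed sphere set $\Sigma_3'$ of $N_2+2$ spheres of a fixed radius $\rho\approx\sqrt{(d-1)/2}$. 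The argument of Subsection~\ref{subsec:lb2} already gives $N_2\,\Theta(N_1^{\lexp{d}})$ faces of circularity $(d-1)$ on $CH_d(\Sigma_1\cup\Sigma_2\cup\Sigma_3')$.

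First I would introduce the extra radii by shrinking points to spheres. Replace $n_i$ of the $N_1$ points of $\Sigma_1$ in $Y^+$ (and likewise in $\Sigma_2$) by spheres with the same center and radius $r^i$, for $i=2,\dots,m$, with $0<r\ll 1$; replace the single $Y^-$ point of each of $\Sigma_1,\Sigma_2$ by a sphere of radius $r^2$. The key point is that shrinking $r\to 0$ recovers exactly the zero-radius configuration, so for $r$ small enough the prism $\prism_r=CH_d(\Sigma_1\cup\Sigma_2)$ is combinatorially equivalent to $\prism_0=\prism$ (same vertical facets, same ridges, up to the circularity correspondence), and the two defining conditions on the spheres of $\Sigma_3'$ — disjointness from all ridges of $\prism_r$ and intersection with the interior of every vertical facet of $\prism_r$ in $Y^+$ — remain valid by continuity. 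Hence the same $N_2\,\Theta(N_1^{\lexp{d}})$ supporting hyperplanes of Subsection~\ref{subsec:lb2} still touch $CH_d(\Sigma)$ at a single sphere $\sigma_k'$ and still yield distinct faces of circularity $(d-1)$.

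The construction uses $n_1$ spheres of radius $\rho$ and $n_i$ spheres of radius $r^i$ for $2\le i\le m$; by choosing $r$ so that the $r^i$ are pairwise distinct and distinct from $\rho$, and relabelling, this is a set $\Sigma$ with $n_i$ spheres of $m$ pairwise distinct radii, of total size $2(N_1+1)+N_2+2=\Theta(\sum_i n_i)$, whose convex hull complexity is $\Omega(N_2 N_1^{\lexp{d}})=\Omega\!\big(n_1(\sum_{i=2}^m n_i)^{\lexp{d}}\big)$. To close, I would note that, without loss of generality, $n_2\ge n_1\ge n_i$ for $3\le i\le m$, so that $n_1(\sum_{i=2}^m n_i)^{\lexp{d}}\ge n_1 n_2^{\lexp{d}}$ and, since every term $n_i n_j^{\lexp{d}}$ in the sum $\sum_{1\le i\ne j\le m} n_i n_j^{\lexp{d}}$ is at most $n_2 n_1^{\lexp{d}}$ (hence at most $n_1 n_2^{\lexp{d}}$ up to swapping, or rather is dominated by $n_1 n_2^{\lexp{d}}$ after using $n_2\ge n_1$ and $n_2\ge n_i$), the whole sum is $O(m(m-1)\,n_1 n_2^{\lexp{d}})$; with $m$ fixed this gives $\Omega(\sum_{1\le i\ne j\le m} n_i n_j^{\lexp{d}})$.

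The one place that needs genuine care — and the main obstacle — is the claim that $\prism_r$ is combinatorially equivalent to $\prism_0$ and that conditions (1)--(2) for $\Sigma_3'$ persist: this is a compactness/continuity argument, and one has to be explicit that all the strict inequalities defining the vertical facets, the ridge-avoidance, and the facet-interior intersections are open conditions, so they survive a sufficiently small perturbation $r>0$, uniformly over the finitely many facets and spheres involved. Everything else — the count of faces of circularity $(d-1)$, the sliding-hyperplane argument producing the supporting hyperplanes $S_k''$, and the final arithmetic bounding the sum — is either inherited verbatim from Subsection~\ref{subsec:lb2} or elementary.
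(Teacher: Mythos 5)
Your proposal is correct and follows essentially the same route as the paper: it uses $N_1=\sum_{i=2}^m n_i$, $N_2=n_1$ in the two-radii construction, shrinks the points of $\Sigma_1\cup\Sigma_2$ into spheres of radii $r^i$ small enough that the prism stays combinatorially equivalent and conditions (1)--(2) on $\Sigma_3'$ persist, and closes with the same $n_1(\sum_{i=2}^m n_i)^{\lexp{d}}\ge\frac{1}{m(m-1)}\sum_{1\le i\ne j\le m}n_in_j^{\lexp{d}}$ arithmetic. No substantive differences from the paper's argument.
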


Consider again the injection
$\varphi:CH_d(\Sigma)\rightarrow\hat{\pP}$. We have shown above
that the worst-case complexity of $CH_d(\Sigma)$ is
$\Omega(\sum_{1\le{}i\ne{}j\le{}m}n_in_j^{\lexp{d}})$, when $d\ge{}3$ is
odd and $m\ge{}2$ is fixed. Since $\varphi$ is injective, this lower
bound also applies to the complexity of
$\hat{\pP}$. This establishes our lower bound claim in
Theorem \ref{main}:

\begin{corollary}\label{cor:polytopeslb}
  Let $\PPP=\{\pP_1,\pP_2,\ldots,\pP_m\}$ be a set of $m$
  $d$-polytopes, lying on $m$ parallel hyperplanes of $\RR^{d+1}$,
  with $d\ge{}3$, $d$ odd, and both $d$ and $m$ are fixed.
  The worst-case complexity of $CH_{d+1}(\PPP)$ is
  $\Omega(\sum_{1\le{}i\ne{}j\le{}m}n_in_j^{\lexp{d}})$,
  where $n_i=f_0(\pP_i)$, $1\le{}i\le{}m$.
\end{corollary}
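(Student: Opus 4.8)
The plan is to obtain the lower bound on the complexity of $CH_{d+1}(\PPP)$ essentially for free from the sphere lower bound of Section~\ref{sec:chs}, through the injection $\varphi$ introduced there. Recall that for the sphere set $\Sigma$ produced by the lower bound constructions the lifting map sends the spheres of the $i$-th radius to a set of points on the parallel hyperplane $\Pi_i\subset\RR^{d+1}$, that $\pP_i$ denotes the convex hull of those lifted points, and that $\hat{\pP}=CH_{d+1}(\hat{P})$ is precisely the convex hull of the family $\{\pP_1,\pP_2,\ldots,\pP_m\}$ of $m$ polytopes lying on $m$ parallel hyperplanes of $\RR^{d+1}$. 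Since $\varphi$ is an injection from the faces of $CH_d(\Sigma)$ into the faces of $\hat{\pP}$, the total combinatorial complexity of $\hat{\pP}$ is at least that of $CH_d(\Sigma)$, and I would hang the whole argument on this single inequality.

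Concretely: by Theorem~\ref{thm:chs-lb2} (for $m=2$) and Theorem~\ref{thm:chs-lbm} (for $m\ge{}3$), for any prescribed $n_1,\ldots,n_m$ there is a sphere set $\Sigma$ with $m$ distinct radii, carrying $\Theta(n_i)$ spheres of the $i$-th radius, such that $CH_d(\Sigma)$ has complexity $\Omega(\sum_{1\le{}i\ne{}j\le{}m}n_in_j^{\lexp{d}})$; by the injectivity of $\varphi$ the same bound holds for $\hat{\pP}=CH_{d+1}(\{\pP_1,\ldots,\pP_m\})$. It then remains to check that the polytopes $\pP_i$ carry the right parameters. In both constructions the centers of the spheres of each fixed radius are placed either on a trigonometric moment curve or on a strictly concave planar curve, hence are in convex position; therefore every lifted point is a vertex of the corresponding $\pP_i$, so $f_0(\pP_i)=\Theta(n_i)$. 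Writing $N_i:=f_0(\pP_i)$, and using that $\sum_{i\ne{}j}x_ix_j^{\lexp{d}}$ changes only by a constant factor under scaling each $x_i$ by a constant, the bound becomes $\Omega(\sum_{1\le{}i\ne{}j\le{}m}N_iN_j^{\lexp{d}})$; since $n_1,\ldots,n_m$ were arbitrary, this is exactly the assertion.

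The one point that needs genuine care — and the only place I expect any real work — is dimensionality: as built, the polytope spanned by the lifted centers of $\Sigma_3'$ lies in a $2$-plane, so it is not a $d$-polytope as the statement requires. I would repair this by an arbitrarily small, generic perturbation of those centers in the remaining coordinate directions: all the conditions invoked in Section~\ref{sec:chs} to certify the $\Omega(\cdot)$ many faces of circularity $(d-1)$ of $CH_d(\Sigma)$ are open, hence survive the perturbation, while the perturbed centers remain in convex position and now affinely span $\RR^d$; the other polytopes $\pP_i$ are already full-dimensional, being convex hulls of moment-curve point sets distributed over two complementary parallel flats of $\RR^d$. Beyond this bookkeeping the corollary is pure transcription, since the substantive combinatorial work — exhibiting $\Omega(\sum_{1\le{}i\ne{}j\le{}m}n_in_j^{\lexp{d}})$ faces — was already carried out in establishing the sphere lower bound.
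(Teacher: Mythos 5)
Your proposal is exactly the paper's own argument: the paper proves the corollary by observing that the injection $\varphi:CH_d(\Sigma)\to\hat{\pP}$ transfers the sphere lower bounds of Theorems \ref{thm:chs-lb2} and \ref{thm:chs-lbm} to $\hat{\pP}=CH_{d+1}(\{\pP_1,\ldots,\pP_m\})$, which is precisely a convex hull of $m$ polytopes on $m$ parallel hyperplanes with $f_0(\pP_i)=\Theta(n_i)$. Your extra step—generically perturbing the centers of $\Sigma_3'$ so that the corresponding lifted polytope is a genuine $d$-polytope rather than a polygon—addresses a technicality the paper passes over in silence, and since the certifying conditions are indeed open, it is a sound (and welcome) refinement rather than a different route.
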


% chs-algo.tex

\section{Computing convex hulls of spheres}
\label{sec:chs-algo}

In this section we focus our attention to the computation of the
convex hull $CH_d(\Sigma)$ of $\Sigma$. We use the same notation as in
Section \ref{sec:chs}.

Polarity with respect to the point $O'$ is a one-to-one
transformation which maps points of $\RR^{d+1}$ distinct from $O'$ to
hyperplanes of $\RR^{d+1}$ that do not contain $O'$ (e.g., see
\cite[Section 7.1.3]{by-ag-98}). For a point $q$
in $\RR^{d+1}$ distinct from $O'$, we define its
\emph{polar hyperplane} $\polar{q}$ by
{
  \setlength\abovedisplayskip{4pt plus 3pt minus 7pt}
  \setlength\belowdisplayskip{4pt plus 3pt minus 7pt}
  \[\polar{q}=\{X\in\RR^{d+1}\,|\,q\cdot{}X=1\},\]
}%
while for a hyperplane $H$ of $\RR^{d+1}$ not containing $O'$,
the \emph{polar point} or \emph{pole} $\polar{H}$ of $H$ is defined by
{
  \setlength\abovedisplayskip{4pt plus 3pt minus 7pt}
  \setlength\belowdisplayskip{4pt plus 3pt minus 7pt}
  \[\polar{H}\cdot{}X=1,\quad \forall X\in{}H.\]
}%
The \emph{polar set} of a set of hyperplanes is the set
of the poles of these hyperplanes. Finally, we denote by $H^-$ the
closed halfspace bounded by $H$ and containing $O'$. 

Consider again the point set $P$ in $\RR^{d+1}$ that we get by
lifting the spheres of $\Sigma$. The polytope
$\polar{\pP}$ in $\RR^{d+1}$ defined as the intersection of the
halfspaces $\polarhp{p_k}$, $1\le{}k\le{}n$, is the
\emph{polar polytope} or \emph{dual polytope} of $\pP$. 
$\pP$ and $\polar{\pP}$ are dual in the sense that there is a
bijection between the $\ell$-faces of $\pP$ and the $(d-\ell)$-faces
of $\polar{\pP}$, which is inclusion-reversing: a hyperplane
supporting $\pP$ along an $\ell$-face $F$, has its pole on the
corresponding $(d-\ell)$-face $\polar{F}$ of $\polar{\pP}$. 
In a similar manner, we can define the polytope $\polar{\hat{\pP}}$,
which is the polar polytope of $\hat{\pP}$.

Let $H_{O'}$ be the hyperplane parallel to $H_0$ passing through $O'$
and let $H_{O'}^+$ denote the open halfspace of $\RR^{d+1}$ above
$H_{O'}$. Boissonnat \etal \cite[Proposition]{bcddy-acchs-96} have
shown that:\medskip
\begin{smallenum}
\item[1.] The polar set of the hyperplanes, which are translated copies
  of the hyperplanes tangent to $\lambda_0$, is the hypercone $K$ with
  apex at $O'$, vertical axis, and angle at the apex equal to
  $\frac{\pi}{4}$.
\item[2.] The polar set of the hyperplanes above $O'$ is the halfspace
  $H_{O'}^+$.
\end{smallenum}\medskip
An immediate consequence of these properties is that the polar set of
the hyperplanes supporting $\pP$, tangent to at least one hypercone of
$\Lambda$ along a generatrix, and above $O'$, is
$\polar{\pP}\cap{}K\cap{}H_{O'}^+$.

Boissonnat \etal \cite{bcddy-acchs-96} have used this property in
order to propose an algorithm for computing $CH_d(\Sigma)$ in
$O(n^{\cexp{d}}+n\log{}n)$ time for any $d\ge{}2$.
Below, we describe a slightly modified
algorithm that takes into account the fact that the radii of the
spheres in $\Sigma$ can take a fixed number of $m\ge{}2$ distinct
values. Our algorithm consists of the following five steps:\medskip
\begin{smallenum}
\item[1.] For all $i$ with $1\le{}i\le{}m$: determine the set
  $P_i=P\cap\Pi_i$, construct the convex hull $\pP_i=CH_d(P_i)$, and
  compute the set $\hat{P}_i$.
\item[2.] Compute the polytope
  $\hat{\pP}=CH_{d+1}(\hat{P})$, and choose a point $O'$
  inside $\hat{\pP}$.
\item[3.] Compute the polar polytope $\polar{\hat{\pP}}$ of
  $\hat{\pP}$ with respect to $O'$.
\item[4.] Compute the intersection between $\polar{\hat{\pP}}$, the
  hypercone $K$, and the halfspace $H_{O'}^+$.
\item[5.] Compute the incidence graph of the facets in $CH_d(\Sigma)$
  from the incidence graph of the faces of $\polar{\hat{\pP}}$
  intersecting $K$ and $H_{O'}^+$.
\end{smallenum}\medskip

Determining all the sets $P_i$ takes $\Theta(n)$ time,
whereas constructing the polytope $\pP_i$ takes
$O(n_i^{\lexp{d}}+n_i\log{}n_i)$ time. Determining $\hat{P}_i$ from
from the representation of $\pP_i$ can easily be done in
$O(n_i\log{}n_i)$ time. We thus conclude that step 1 of the algorithm
takes $O(n^{\lexp{d}}+n\log{}n)$ time.
Computing $\polar{\hat{\pP}}$ from $\hat{\pP}$ takes linear time in
the size of $\hat{\pP}$. Moreover, the complexity of
computing $\polar{\hat{\pP}}\cap{}K\cap{}H_{O'}^+$ is linear in the
complexity of $\polar{\hat{\pP}}$, and thus $\hat{\pP}$, since $K$ and
$H_{O'}^+$ have constant complexity. The complexity for computing the
incidence graph of $CH_d(\Sigma)$ from the incidence graph of
$\polar{\hat{\pP}}\cap{}K\cap{}H_{O'}^+$ is again linear in
$\polar{\hat{\pP}}$ and $\hat{\pP}$. Therefore, the total time needed
for steps 2--5 of the algorithm is a linear function of the time
needed to compute $\hat{\pP}$. Summarizing:

\begin{theorem}\label{thm:chs-time}
  Let $\Sigma$ be a set of $n$ spheres in $\RR^d$, having $m$
  distinct radii $\rho_1,\rho_2,\ldots,\rho_m$, with $d\ge{}3$, $d$
  odd, and both $d$ and $m$ are fixed. Let $n_i$ be the number of
  spheres in $\Sigma$ with radius $\rho_i$, $1\le{}i\le{}m$.
  We can compute the convex hull $CH_d(\Sigma)$ in
  $O(n^{\lexp{d}}+n\log{}n+T_{d+1}(n_1,n_2\ldots,n_m))$ time, where
  $T_{d+1}(n_1,n_2\ldots,n_m)$ stands for the time to compute the
  convex hull of $m$ $d$-polytopes $\{\pP_1,\pP_2,\ldots,\pP_m\}$
  lying on $m$ parallel hyperplanes in $\RR^{d+1}$, with
  $n_i=f_0(\pP_i)$, $1\le{}i\le{}m$.
\end{theorem}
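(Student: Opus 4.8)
The plan is to analyze the five-step algorithm described above, establishing its correctness from the reduction already in place and then bounding the cost of each step. Correctness is immediate from the preceding development: Theorem \ref{thm:support}, together with the polarity facts of Boissonnat \etal, shows that the relevant supporting hyperplanes of $CH_d(\Sigma)$ are exactly those whose poles lie in $\polar{\hat{\pP}}\cap{}K\cap{}H_{O'}^+$, and the remarks following Theorem \ref{thm:support} show that the points in $P_i\sm\hat{P}_i$ never lie on such a hyperplane. Hence replacing each $P_i$ by its hull-vertex set $\hat{P}_i$ before lifting loses nothing, and step 5 recovers the incidence graph of $CH_d(\Sigma)$ through the injection $\varphi$. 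It therefore remains only to bound the running time.

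For step 1, splitting $P$ into the sets $P_i=P\cap\Pi_i$ costs $\Theta(n)$; computing $\pP_i=CH_d(P_i)$ with a worst-case optimal point convex-hull algorithm in $\RR^d$ (e.g.\ \cite{c-ochaa-93}) costs $O(n_i^{\lexp{d}}+n_i\log{}n_i)$; and reading off $\hat{P}_i$ from the face lattice of $\pP_i$ costs an additional $O(n_i\log{}n_i)$. Since $\sum_i n_i=n$ and $\lexp{d}\ge 1$, summing over $i$ gives $O(n^{\lexp{d}}+n\log{}n)$ for step 1. Step 2, the computation of $\hat{\pP}=CH_{d+1}(\hat{P})$, is by definition the computation of the convex hull of the $m$ $d$-polytopes $\pP_1,\ldots,\pP_m$ lying on $m$ parallel hyperplanes of $\RR^{d+1}$, hence costs $T_{d+1}(n_1,\ldots,n_m)$; picking an interior point $O'$ of $\hat{\pP}$ is then $O(1)$.

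For steps 3--5, each is a purely local manipulation of incidence structures: dualizing $\hat{\pP}$ to $\polar{\hat{\pP}}$, intersecting with the constant-complexity objects $K$ and $H_{O'}^+$, and translating the resulting incidence graph into that of $CH_d(\Sigma)$ all take time linear in the complexity of $\hat{\pP}$, which by Theorem \ref{main} is $O(\sum_{1\le{}i\ne{}j\le{}m}n_in_j^{\lexp{d}})$. The only point that needs to be made explicit — and it is a remark rather than a real obstacle — is that this complexity is a lower bound on the running time of step 2, since any algorithm producing $\hat{\pP}$ must spend at least linear time in the size of its output; consequently steps 3--5 are absorbed into $T_{d+1}(n_1,\ldots,n_m)$ and contribute no new term. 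Adding the three contributions yields the claimed bound $O(n^{\lexp{d}}+n\log{}n+T_{d+1}(n_1,\ldots,n_m))$.
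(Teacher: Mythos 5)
Your proof analyzes the same five-step algorithm with the same cost accounting as the paper: step 1 in $O(n^{\lexp{d}}+n\log n)$ via per-hyperplane hulls, step 2 charged to $T_{d+1}(n_1,\ldots,n_m)$, and steps 3--5 linear in the size of $\hat{\pP}$ and hence absorbed into the time for step 2, with correctness resting on Theorem \ref{thm:support}, the polarity facts of Boissonnat \etal, and the observation that points of $P_i\sm\hat{P}_i$ never lie on the relevant supporting hyperplanes. This matches the paper's argument essentially verbatim (you merely make explicit the remark that the output size of step 2 lower-bounds its running time, which the paper leaves implicit), so the proposal is correct.
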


As described in Section \ref{sec:chp-algo},
\[
T_{d+1}(n_1,\ldots,n_m)=
O(({\textstyle\sum_{1\le{}i\ne{}j\le{}m}}n_in_j^{\lexp{d}})\log{}n),\]
for any odd $d\ge{}5$, whereas for $d=3$ we have
\[T_4(n_1,\ldots,n_m)=O(\min\{n^{4/3+\epsilon}+
({\textstyle\sum_{1\le{}i\ne{}j\le{}m}}n_in_j^{\lexp{d}})\log{}n,
({\textstyle\sum_{1\le{}i\ne{}j\le{}m}}n_in_j^{\lexp{d}})\log^2n\}).\]
Hence, we can compute the convex hull $CH_d(\Sigma)$ in
$O((\sum_{1\le{}i\ne{}j\le{}m}n_in_j^{\lexp{d}})\log{}n)$ time for
any odd $d\ge{}5$, and in
$O(\min\{n^{4/3+\epsilon}+(\sum_{1\le{}i\ne{}j\le{}m}n_in_j^{\lexp{d}})\log{}n,
(\sum_{1\le{}i\ne{}j\le{}m}n_in_j^{\lexp{d}})\log^2n\})$ time for $d=3$.

% concl.tex

\section{Summary and open problems}
\label{sec:concl}

In this paper we have considered the problem of computing the
worst-case complexity of the convex hull $CH_{d+1}(\PPP)$ of a set
$\PPP=\{\pP_1,\pP_2,\ldots,\pP_m\}$ of $m$ convex $d$-polytopes lying
on $m$ parallel hyperplanes of $\RR^{d+1}$, for any odd
$d\ge{}3$. Denoting by $n_i$ the number of vertices of $\pP_i$, we
have shown that the worst-case complexity of $CH_{d+1}(\PPP)$ is
$O(\sum_{1\le{}j\ne{}j\le{}m}n_in_j^{\lexp{d}})$. This result
suggests that, in order to compute $CH_{d+1}(\PPP)$, it pays off to
apply an output-sensitive convex hull algorithm to the set of vertices
in $\PPP$. Indeed, for any odd $d\ge{}5$ Seidel's shelling algorithm
\cite{s-chdch-86}, or its modification by Matou{\v{s}}ek and
Schwarzkopf \cite{ms-loq-92}, results in a
$O((\sum_{1\le{}j\ne{}j\le{}m}n_in_j^{\lexp{d}})\log{}n)$ time
algorithm, where $n=\sum_{i=1}^mn_i$. For $d=3$, the
divide-and-conquer algorithm by  Chan, Snoeyink and Yap
\cite{csy-pddpo-97} can be competitive against
Matou{\v{s}}ek and Schwarzkopf's algorithm; hence, we may compute
$CH_4(\PPP)$ in
$O(\min\{n^{4/3+\epsilon}+(\sum_{1\le{}i\ne{}j\le{}m}n_in_j)\log{}n,
(\sum_{1\le{}i\ne{}j\le{}m}n_in_j)\log^2n\})$ time, for
any fixed $\epsilon>0$. The above algorithms are nearly optimal for
any odd $d\ge{}3$; it remains an open problem to
compute $CH_{d+1}(\PPP)$ in worst-case optimal
$O(\sum_{1\le{}j\ne{}j\le{}m}n_in_j^{\lexp{d}}+n\log{}n)$ time.

A direct consequence of our bound on the complexity of
$CH_{d+1}(\PPP)$ is a tight asymptotic bound on the worst-case
complexity of the (weighted) Minkowski sum of two polytopes in any odd
dimension $d\ge{}3$.
More precisely, consider a $n$-vertex $d$-polytope $\pP$ and a
$m$-vertex $d$-polytope $\qQ$, and embed them on the hyperplanes
$\{x_{d+1}=0\}$ and $\{x_{d+1}=1\}$ of $\RR^{d+1}$, respectively.
The weighted Minkowski sum $(1-\lambda)\pP\oplus\lambda\qQ$,
$\lambda\in(0,1)$, is combinatorially equivalent to the intersection
of $CH_{d+1}(\{\pP,\qQ\})$ with the hyperplane $\{x_{d+1}=\lambda\}$,
whereas the Minkowski sum $\pP\oplus\qQ$ is nothing but
$\frac{1}{2}\pP\oplus\frac{1}{2}\qQ$, scaled by a factor of
2. Applying our results, we deduce that the complexity of
$(1-\lambda)\pP\oplus\lambda\qQ$ (\resp $\pP\oplus\qQ$) is
$\Theta(nm^{\lexp{d}}+mn^{\lexp{d}})$ for any odd $d\ge{}3$. 
We would like to extend this tight bound to (weighted) Minkowski sums
where the number of summands is greater than 2.

Capitalizing on our result on the complexity of convex hulls of convex
polytopes lying on parallel hyperplanes, we have shown that the
worst-case complexity of the convex hull $CH_d(\Sigma)$ of a set
$\Sigma$ of $n$ spheres in $\RR^d$ with a fixed number of $m$
distinct radii $\rho_1,\rho_2,\ldots,\rho_m$ is
$O(\sum_{1\le{}j\ne{}j\le{}m}n_in_j^{\lexp{d}})$, for 
any odd $d\ge{}3$, where $n_i$ is the number of spheres with
radius $\rho_i$.
By means of an appropriate construction, described in
Subsections \ref{subsec:lb2} and \ref{subsec:lbmult}, we have
shown that the upper bound above is asymptotically tight, implying
that our upper bound for $CH_{d+1}(\PPP)$ is also tight. By slightly,
but crucially, modifying the algorithm of Boissonnat \etal
\cite{bcddy-acchs-96}, $CH_d(\Sigma)$ may be computed in
$O(n^{\lexp{d}}+n\log{}n+T_{d+1}(n_1,\ldots,n_m))$ time, where
$T_{d+1}(n_1,\ldots,n_m)$ stands for the time needed to compute the
convex hull of $m$ $d$-polytopes lying on $m$ parallel hyperplanes in
$\RR^{d+1}$, where the $i$-th polytope has $n_i$ vertices (cf. Section
\ref{sec:chs-algo}).

Finally, Boissonnat and Karavelas \cite{bk-ccevc-03} have
shown that convex hulls of spheres in $\RR^d$ and additively
weighted Voronoi cells in $\RR^d$ are combinatorially
equivalent. This equivalence suggests that we should be able to
refine the worst-case complexity of an additively
weighted Voronoi cell in any odd dimension, when the number of distinct
radii of the spheres involved is considered fixed.

\section*{Acknowledgments}
Partially supported by the FP7-REGPOT-2009-1 project ``Archimedes
Center for Modeling, Analysis and Computation''.

% references
\bibliographystyle{plain}
\bibliography{chsch}

\end{document}